\theoremstyle{theorem}
\newtheorem{theorem}{Theorem}
\newtheorem{lemma}{Lemma}
\newtheorem{corollary}{Corollary}
\newtheorem{definition}{Definition}
\theoremstyle{remark}
\newtheorem{remark}{Remark}
\newtheorem{example}{Example}
\begin{document}

\title{A Euclidean Algorithm for Binary Cycles with Minimal Variance}
\author{Luca Ghezzi}
\address{R\&D, ABB S.p.A., Via dell’Industria, 18 - 20010 Vittuone (MI), Italy}
\email[Corresponding author]{luca.ghezzi@it.abb.com}

\author{Roberto Baldacci}
\address{DEI, University of Bologna, via Venezia, 52 - 47521 Cesena (FC), Italy}
\email{r.baldacci@unibo.it}

\begin{abstract}
The problem is considered of arranging symbols around a cycle, in such a way that distances between different instances of a same symbol be as uniformly distributed as possible. A sequence of moments is defined for cycles, similarly to the well-known praxis in statistics and including mean and variance. Mean is seen to be invariant under permutations of the cycle. In the case of a binary alphabet of symbols, a fast, constructive, sequencing algorithm is introduced, strongly resembling the celebrated Euclidean method for greatest common divisor computation, and the cycle returned is characterized in terms of symbol distances. A minimal variance condition is proved, and the proposed Euclidean algorithm is proved to satisfy it, thus being optimal. Applications to productive systems and information processing are briefly discussed.
\end{abstract}




\keywords{Euclidean algorithm, cyclic sequencing, mixed integer quadratic programming}



\maketitle

\section{Introduction}

A cycle, or a cyclic order \cite{Huntington,Novak}, is a very intuitive structure describing the way some given collection of objects may be orderly arranged around a circle.
Said objects can be represented by a suitable alphabet of symbols, describing the prototypes of classes of similar objects. The number of instances for each symbol, or the multiplicity vector, denotes the number of elements in each class.
Since two paths are always available to connect two points around a circle, an order relation is uniquely determined up to specifying the positive direction around the cycle.
After this done, for each symbol instance in the cycle it is possible to define what other symbol instance is following next.
Moreover, it is also possible to count the distance, in steps, from each symbol instance to the next instance of the same symbol.
This induces naturally the mean and variance of the cycle.
One may now seek for cycles which are variance minimizers.

The problem, which we generally term \emph{Cyclic Sequencing Problem}, pertains to constrained combinatorial optimization and may be addressed in the general case by means of Mixed Integer Quadratic Programming (MIQP) techniques \cite{Bazaraa}, with a more than algebraically growing computational burden.
Nonetheless, the case of a binary alphabet, that is, when only two symbols are concerned, though with arbitrary multiplicities, allows for a direct, constructive, algorithmic solution, with linear overhead.
Specifically, a sequencing algorithm is here proposed, whose structure essentially coincides with the celebrated Euclidean method to compute the greatest common divisor (gcd) of two natural numbers, and which is additionally enriched of instructions to compile an admissible cycle.
Such cycle may be completely characterized in terms of symbol distances.
A famous analysis from Lam\'{e} \cite{Lame} proves that the Euclidean algorithm requires at most a number of steps which is five times the number $h$ of base-10 digits of the smaller number in the couple whose gcd is sought for \cite{Grossman,Honsberger}. Consequently, also the proposed Euclidean Sequencing Algorithm is $O(h)$ in the worst case.

The analysis of the involved mathematical programs allows deducing necessary and sufficient conditions for variance minimality.
It is seen that the proposed Euclidean sequencing algorithm satisfies these conditions and it is therefore optimal.
The reason behind the successfulness of the algorithm is readily found in the strong penalization operated by squares (or, more generally, by higher than linear powers) over deviations of distances from their average value.
Additionally, a remarkably simple argument, ultimately connected to the invariance of the number of steps required to complete a round trip around the cycle, shows that the mean is invariant to cycle permutations.
This two facts, mean invariance and square penalization, easily allow to prove the theory here developed.

Euclid's original algorithm was introduced around 300 B.C. in the celebrated Elements \cite{Euclid}, Book VII, Proposition 2, to find the greatest common measure of two given numbers not relatively prime (according to the geometrically inspired terminology of the times). Also due to the geometrical identification of numbers and segment lengths, the exact presentation of the algorithm resorts on repeated subtractions in place of divisions and it thus formally differs from, but is equivalent to the vest currently adopted.

Some few applications to even and cyclically repeated distribution of symbols are found in the literature. Toussaint \cite{Toussaint} connects the problem with music and shows that traditional musical rhythms are generated by the Euclidean algorithm (with a binary alphabet, if expressed in the terminology of the present work) and are therefore dubbed \emph{Euclidean rhythms}. The approach is algorithmic and neither mathematical formalism to express an evenness metric, nor proofs of optimality are there found.

The same problem has been analyzed by Bjorklund \cite{Bjorklund2,Bjorklund1} in connection with spallation neutral source (SNS) accelerators as used in nuclear physics.
Translated in the terminology of the present work, the problem is there to evenly sequence symbols from a binary alphabet, where one symbol is interpreted as 1 and the other as 0.
Despite the natural approach with Boolean symbols, the nullity of 0 seems to have induced the Author to choose to only consider the 1's, neglecting the 0's when defining suitable evenness metrics.
On the other hand, variance as a metric for evenness, as defined in the present work, is computed with reference to all symbols in the alphabet, as opposed to Bjorklund's approach, which leads to an unsatisfactory metric.
It is easily seen that the former approach (but not the latter, as correctly pointed out in \cite{Bjorklund2}) manages to fulfill Bjorklund's very natural requirements for a good evenness metrics (viz., invariance under rotation, efficient computation, null value for perfectly even distributions), with the additional advantage of simplicity; see section \S\ref{sec:apps} for examples.

A mathematical analysis of the problem at hand is found in Demaine et Al. \cite{Demaine}, with remarkable theoretical results applying to the product of the Euclidean sequencing algorithm.
Also in this case, evenness metrics are computed with reference to one sole symbol 1 in a binary alphabet.
We argue that this be a bias induced by the underlying applicative contexts, where 1 stands for a physical event (a pulse, or a musical note), while 0 stands for nothingness, just waiting for an event to happen.
Contrarily, the inspiring applicative problem for the present work resides in industrial manufacturing systems, where symbols stand for different product types, that is, real objects, which cannot be associated with nothingness.
Moreover, the formal symmetry in considering all symbols in the alphabet, when defining metrics, is also for mathematical beauty, as well as extending immediately to more than binary alphabets.
Demaine et Al. approach the problem of evenness in terms of maximization of a metric constituted by the sum of chordal distances between 1 symbols.

Relationships with so-called Euclidean strings (not relevant to the present work) are also discussed in the references above; see, e.g., Ellis et Al. \cite{Ellis}.


\subsection{Contributions of this paper}

In this paper, we introduce a new problem, the \emph{Cyclic Sequencing Problem} (CSP), motivated by a real application from industrial manufacturing systems. Our distinct contributions in this paper are as follows:
\begin{itemize}
  \item We propose a novel mathematical programming formulation for the CSP and a relaxation that is used to derive valid lower bounds.
  \item For the special case of binary cycles, we propose an algorithm, the Euclidean Sequencing Algorithm (ESA), that is similar to the algorithm proposed by Demaine et al. \cite{Demaine}. In contrast, the present analysis approaches the problem of evenness in terms of minimization of a suitably defined variance in the distribution of all symbols in the alphabet. One may immediately appreciate the difference between the former approach, which is metrical and, as such, set in Euclidean spaces (as a special case of Hilbert spaces), and the latter approach, which is essentially combinatorial. In other terms, the metrical properties of a circle, including chordal distances, are not essential to the present analysis. Since the Euclidean algorithm in the case of a binary alphabet returns a solution which is optimal for the metrics defined in \cite{Demaine} as well as the one here defined, a further connection between all of them is thus encountered.
  \item Also for the special case of binary cycles, we prove a minimal variance condition and we show that the proposed ESA satisfies it, thus being optimal.
\end{itemize}

The outline of the paper is as follows.
After introducing the standard notation adopted, including the concept of a cycle, in Section \ref{sec:notation} (raw) moments are defined for cycles, along with the problem of sequencing around a cycle a set of symbols from a given alphabet and with prescribed multiplicities. The mean is then introduced as the first moment, and its invariance under permutation is proved, in Section \ref{sec:roundtrip}.
Central moments, and particularly variance, are defined in Section \ref{sec:minvariance}, where the variance minimization problem is also introduced and shown to be equivalent to the minimization of the second (raw) moment.
The Euclidean sequencing Algorithm (ESA) is introduced and exemplified in Section \ref{sec:algorithm}, where the cycle returned is completely characterized in terms of the distances for the more abundant and the less abundant symbol in the binary alphabet.
The optimality condition is introduced and proved in section \ref{sec:optimality}, together with the crucial result that the ESA satisfies said condition and is optimal for the problem at hand.
Finally, applications are discussed in section \ref{sec:apps}.

\section{Notation and definitions}
\label{sec:notation}

Following the standard notation,
$\mathbb{N}:=\{1,2,\ldots\}$ is the set of natural numbers (0 is excluded),
$\mathbb{N}_0:=\mathbb{N}\cup\{0\}$, $\mathbb{Z}:=\{\ldots,-2,-1,0,1,2,\ldots\}$ is the ring of integers, $\mathbb{Q}$ is the field of rationals, and
$\mathbb{R_+}:=\{x\in\mathbb{R}\,|\,x\ge 0\}$ is the half-line of non negative reals.
Sets of symbols (order does not matter, repetitions are not accounted for) are represented by curly brackets (e.g., $\{1,2\}=\{2,1\}=\{2,1,2\}$), while sequences (order matters, repetitions are accounted for) are represented as vectors, i.e., by square brackets (e.g., $[1,2]\neq[2,1]\neq[2,1,2]$).
Let $\textrm{Sym}(N)$ denote the symmetric group on a finite set of $N$ symbols, i.e., the set of all permutations of $N$ distinct symbols, forming a (generally non Abelian) group with reference to function composition $\circ$.
For $a,b\in\mathbb{N}$, the classic algebraic notations $a\,|\,b$ stands for $a$ divides $b$, i.e., $b/a\in\mathbb{N}$, and $a \nmid b$ stands for $a$ does not divide $b$, i.e., $b/a\notin\mathbb{N}$.

Let $\mathcal{A}:=[a_1,a_2,\ldots,a_n]$ be an \emph{alphabet} of $n:=|\mathcal{A}|$ distinct \emph{symbols} and
$\mathbf{m}\in\mathbb{N}^n$ a vector of $n$ positive integers representing prescribed \emph{multiplicites} of said symbols, in such a way that
\begin{equation}
[a_k\,|\,m_k]
:=[\underbrace{a_k, \ldots, a_k}_{m_k \textrm{ times}}],\qquad a_k\in\mathcal{A},
\end{equation}
represents the $m_k$ repetitions of symbol $a_k$.
The couple $\mathcal{S}:=(\mathcal{A},\mathbf{m})$ shall be termed a \emph{cyclic sequencing problem}.
Let $N:=\|\mathbf{m}\|_1=\sum_{k=1}^nm_k$ be the total number of all symbols in the cyclic sequencing problem, accounting for repetitions.
It follows from the definitions that all symbols in the alphabet are used at least once: the case of some null multiplicity may always be reduced to a smaller alphabet comprising only those symbols with positive multiplicity.

Let $\mathbb{Z}_N:=\mathbb{Z}/N\mathbb{Z}$ be the quotient group with reference to addition modulo $N$, i.e., the set of cosets $k+N\mathbb{Z}$, for $k\in\{0,1,\ldots,N-1\}$. Clearly, $\mathbb{Z}_N$ is isomorphic to the group $(\{[0]_N,[1]_N,\ldots,[N-1]_N\};+)$, which is the same as $(\{[1]_N,[2]_N,\ldots,[N]_N\};+)$.
Let $C\,:\,\mathbb{Z}_N\rightarrow\mathcal{A}$ be a mapping from the group of rest classes modulo $N$ to the alphabet, with $C_j:=C([j]_N)$ for brevity.
Any such mapping is termed a \emph{cyclic order}, i.e., informally, a way of sequencing a set of symbols around a circle. A set with a cyclic order is termed a \emph{cycle} and, with a little abuse, we shall treat the latter two terms as synonyms. Cycles are characterized by a cyclic, asymmetric, transitive, total ternary order relation $(a,b,c)$, indicating that $b$ lies after $a$ and before $c$. Among the two possible verses, we shall conventionally refer to the one induced by increasing $[j]_N$ in $\mathbb{Z}_N$, with no loss of generality. In the very interesting $n=2$ case, we shall refer to a \emph{binary alphabet} and \emph{binary cycles}.

Let $\mathcal{J}_k:=C^{-1}(a_k)=\{[j]_N\in\mathbb{Z}_N\,|\,C_j=a_k\}$ be the counter image of the $k$th symbol in the alphabet.
Let  $\Omega(\mathcal{A},\mathbf{m}):=\{C\,:\,\mathbb{Z}_N\rightarrow\mathcal{A}\;|\;m_k=|\mathcal{J}_k|,\forall k\in\{1,\ldots,n\}\}$ be the set of \emph{admissible} cycles, i.e., those agreeing with all prescribed multiplicities.
Said set is not empty, for it contains at least the \emph{base cycle}
\begin{equation}\label{eqn:base}
\hat{C}:=\bigsqcup_{k=1}^n[a_k\,|\,m_k]
=[
\underbrace{a_1,\ldots,a_1}_{m_1 \textrm{times}},
\underbrace{a_2,\ldots,a_2}_{m_2 \textrm{times}},
\ldots,
\underbrace{a_n,\ldots,a_n}_{m_n \textrm{times}}
],
\end{equation}
where $\sqcup$ stands for concatenation. The adjective \emph{base} is because any other cycle $C$ may be expressed as the action on $\hat{C}$ of some permutation, i.e., $C=\hat{C}\circ\tau$ for some $\tau\in\textrm{Sym}(N)$, according to the following  diagram, which is commutative by construction.
\begin{displaymath}
\xymatrix
{
\mathbb{Z}_N \ar[r]^{\tau} \ar[dr]_C &
\mathbb{Z}_N \ar[d]^{\hat{C}} \\
&
\mathcal{A}
}
\end{displaymath}
Notice that, in \eqref{eqn:base}, we have used the same symbol $\hat{C}$ for both the application and the result produced by the application, with a little abuse.
Owing symbol multiplicity, the number of distinct admissible cycles does not exceed $|\textrm{Sym}(N)|=N!$ and, precisely, $|\Omega(\mathcal{S})|:=N!/\prod_{k=1}^n m_k!$.
The tally may be further lowered if the theory has to developed up to the addition of some rest class $[j_0]_N$ (i.e., modulo the symmetry induced by rotations of the circle) and/or up to the verse (i.e., modulo the symmetry induced by the two possible orderings over the circle).

\begin{definition}\label{def:distance}
Let a \emph{step} be a unit move around the cycle, that is, from $[j]_N$ to $[j+1]_N$ for some $[j]_N\in\mathbb{Z}_N$. Let $\Delta_j$ be the distance, in steps, from the $j$th entry in the cycle $C$ (i.e., $C_j=a_k\in\mathcal{A}$, for some $k$) to the next instance in the cycle of the same symbol $a_k$. In case $m_k=1$, i.e., if some symbol $a_k$ appears only once, the definition applies with reference to a full round trip from and to the unique instance, with $\Delta_j=N$.
\end{definition}

In analogy with other well-known contexts, like the theory of probability density functions in statistics, let us introduce the following
definitions.
\begin{definition}[Sub-moments of a cycle]\label{def:submoment}
For a given cycle $C\in\Omega(\mathcal{A},\mathbf{m})$ and $a_k\in\mathcal{A}$, let
\begin{equation}\label{eqn:submoment}
\mathcal{M}_p(a_k):=\frac{1}{N}\sum_{j\in\mathcal{J}_k}\Delta_j^p.
\end{equation}
be its $(p,k)$th sub-moment.
\end{definition}

\begin{definition}[Raw moment of a cycle]\label{def:moment}
For a given cycle $C\in\Omega(\mathcal{A},\mathbf{m})$, let
\begin{equation}\label{eqn:moment}
\mathcal{M}_p(C)
:=\frac{1}{N}\sum_{j=1}^{N}\Delta_j^p
 =\sum_{k=1}^n\mathcal{M}_p(a_k)
\end{equation}
be its $p$th (raw) moment.
\end{definition}

\begin{remark}
The additive decomposition in terms of the $(p,k)$th sub-moments follows from linearity of summation.
Despite such decomposition being straightforward, it will prove remarkably helpful developing the following results.
\end{remark}

\section{Mean of cycle and the round trip lemma}
\label{sec:roundtrip}

\begin{definition}[Mean of a cycle]\label{def:mean} For a given cycle $C\in\Omega(\mathcal{A},\mathbf{m})$, let
\begin{equation}\label{eqn:mean}
\mu(C):=\mathcal{M}_1(C):=\frac{1}{N}\sum_{j=1}^{N}\Delta_j
\end{equation}
be its mean.
\end{definition}
The first moment is termed mean of the cycle because it measures the mean distance in steps between symbols of the same kind.
Contrarily to other contexts, the mean is an invariant for the cyclic sequencing problem, meaning that it only depends on the (cardinality $n$ of the) alphabet $\mathcal{A}$, but not on how symbols are sequenced. Moreover, the mean is immediately known \emph{a priori} and, maybe surprisingly, it is always integer, despite its definition includes a ratio.
The reason for this as simple as stunning result lies in the following

\begin{lemma}[Round trip lemma]\label{th:roundtrip}
Let $(\mathcal{A},\mathbf{m})$ be a cyclic sequencing problem. Then
\begin{equation}\label{eqn:roundtrip}
\sum_{j\in\mathcal{J}_k}\Delta_j=N,\qquad\forall k\in\{1,\ldots,n\}.
\end{equation}
Moreover, for any admissible cycle $C\in\Omega(\mathcal{A},\mathbf{m})$ the mean is invariant under symbol permutations, integer valued and equal to the number of distinct symbols, that is,
\begin{equation}\label{eqn:invariance}
\mu(C)=n\in\mathbb{N}.
\end{equation}
\end{lemma}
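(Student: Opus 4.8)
The plan is to prove the first identity \eqref{eqn:roundtrip} by a telescoping argument over the instances of a fixed symbol $a_k$, and then derive \eqref{eqn:invariance} by summing over all $k$ and dividing by $N$.

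First I would fix $k\in\{1,\ldots,n\}$ and enumerate the elements of $\mathcal{J}_k$ in cyclic order as $[j_1]_N,[j_2]_N,\ldots,[j_{m_k}]_N$, following the conventional verse of increasing index. By Definition~\ref{def:distance}, each $\Delta_{j_i}$ is precisely the number of steps from $[j_i]_N$ to $[j_{i+1}]_N$ (indices of the $j$'s taken cyclically modulo $m_k$, so that $j_{m_k}$ wraps around to $j_1$). Concatenating these consecutive hops, the partial sums of the $\Delta_{j_i}$ trace out a full loop around $\mathbb{Z}_N$: one departs from $[j_1]_N$, visits every instance of $a_k$ exactly once, and returns to $[j_1]_N$. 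Since a complete round trip of $\mathbb{Z}_N$ consists of exactly $N$ unit steps (and the walk never overshoots, as each $\Delta_{j_i}\ge 1$ and the arcs between consecutive instances of $a_k$ partition the $N$ steps of the cycle with no gap and no overlap), we get $\sum_{i=1}^{m_k}\Delta_{j_i}=N$, which is exactly \eqref{eqn:roundtrip}. The edge case $m_k=1$ is consistent: there is a single term $\Delta_j=N$ by the explicit convention in Definition~\ref{def:distance}, and the round trip degenerates to a single full loop.

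Next, summing \eqref{eqn:roundtrip} over all $k\in\{1,\ldots,n\}$ and using that the sets $\mathcal{J}_1,\ldots,\mathcal{J}_n$ partition $\mathbb{Z}_N$ (every entry of the cycle carries exactly one symbol of the alphabet), I obtain
\begin{equation}
\sum_{j=1}^{N}\Delta_j=\sum_{k=1}^{n}\sum_{j\in\mathcal{J}_k}\Delta_j=\sum_{k=1}^{n}N=nN.
\end{equation}
Dividing by $N$ gives $\mu(C)=\mathcal{M}_1(C)=\frac{1}{N}\sum_{j=1}^N\Delta_j=n$. This quantity depends only on $n=|\mathcal{A}|$, hence it is the same for every admissible $C\in\Omega(\mathcal{A},\mathbf{m})$ (in particular invariant under the permutations $\tau\in\mathrm{Sym}(N)$ relating any cycle to the base cycle $\hat C$), it is a natural number since $n\in\mathbb{N}$, and it equals the number of distinct symbols, establishing \eqref{eqn:invariance}.

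The only delicate point — the ``main obstacle'', though a mild one — is making the telescoping rigorous at the level of $\mathbb{Z}_N$: one must argue carefully that the arcs $[j_i]_N\to[j_{i+1}]_N$ for $i=1,\ldots,m_k$ (cyclically) tile the whole of $\mathbb{Z}_N$ exactly once, with no double counting and no omission, so that their lengths genuinely sum to $N$ rather than to a multiple of $N$. This follows from the fact that consecutive instances of $a_k$ in the chosen verse have, by the very definition of ``next instance of the same symbol'', no intervening instance of $a_k$, so the half-open arcs they delimit are pairwise disjoint and cover $\mathbb{Z}_N$; equivalently, one may phrase it as a single closed walk on the cycle graph $\mathbb{Z}_N$ that winds around exactly once. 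Everything else is the bookkeeping of linearity of summation and the partition property already noted after Definition~\ref{def:moment}.
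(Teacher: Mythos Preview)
Your proof is correct and follows essentially the same approach as the paper's own proof: the paper also argues that stepping from one instance of $a_k$ to the next, repeatedly, produces exactly one complete round trip of $N$ steps, and then obtains $\mu(C)=n$ via the sub-moment decomposition $\mu(C)=\sum_{k=1}^n\mathcal{M}_1(a_k)=\sum_{k=1}^n\frac{1}{N}\sum_{j\in\mathcal{J}_k}\Delta_j$. Your version simply fleshes out the tiling/telescoping justification that the paper leaves implicit.
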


\begin{proof}
Result (\ref{eqn:roundtrip}) follows immediately because an exactly complete round trip over the cycle is obtained by stepping from one symbol instance to the next instance of the same symbol, and then forward to the next instance and so on, visiting exactly once any instance of the same symbol until the starting instance is reached again, with exactly $N$ steps.

Next, let $C:=\hat{C}\circ\tau$ be a generic admissible cycle, with $N$ total entries.
Let $\Delta_j$ be the distance in steps from the $j$th entry to the next occurrence of the same symbol.
After sub-moment decomposition and applying (\ref{eqn:roundtrip}) to any symbol $a_k\in\mathcal{A}$, one gets
$$
\mu(C)
=\mathcal{M}_1(C)
=\sum_{k=1}^n\mathcal{M}_1(a_k)
=\sum_{k=1}^n\frac{1}{N}\sum_{j\in\mathcal{J}_k}\Delta_j
=n\in\mathbb{N}.
$$
\end{proof}

It follows that one may write $\mu(\mathcal{S})=\mu(\mathcal{A},\mathbf{m})$ instead of $\mu(C)$ for $C\in\Omega(\mathcal{A},\mathbf{m})$, or simply $\mu$ whenever clear from the context.
Moreover, consistently with Lemma \ref{th:roundtrip}, we shall use the value $n$ for $\mu$.

\begin{example}\label{eg:0}
Let $(\mathcal{A},\mathbf{m})=([a_1,a_2],[8,4])$, meaning 8 instances of symbol $a_1$ and 4 instances of symbol $a_2$, for a total $N_1=12$ items. Considering, e.g., the  three admissible cycles
$$
\begin{array}{lcl}
\hat{C} & = & [
\overset{1}{a_1},
\overset{1}{a_1},
\overset{1}{a_1},
\overset{1}{a_1},
\overset{1}{a_1},
\overset{1}{a_1},
\overset{1}{a_1},
\overset{5}{a_1},
\overset{1}{a_2},
\overset{1}{a_2},
\overset{1}{a_2},
\overset{9}{a_2}
],
\\
C_1 & = & [
\overset{3}{a_1},
\overset{1}{a_2},
\overset{3}{a_2},
\overset{1}{a_1},
\overset{2}{a_1},
\overset{4}{a_2},
\overset{1}{a_1},
\overset{1}{a_1},
\overset{2}{a_1},
\overset{4}{a_2},
\overset{1}{a_1},
\overset{1}{a_1}
],
\\
C_2 & = & [
\overset{1}{a_1},
\overset{2}{a_1},
\overset{3}{a_2},
\overset{1}{a_1},
\overset{2}{a_1},
\overset{3}{a_2},
\overset{1}{a_1},
\overset{2}{a_1},
\overset{3}{a_2},
\overset{1}{a_1},
\overset{2}{a_1},
\overset{3}{a_2}
],
\end{array}
$$
where the distances $\Delta_j$ have been reported on top of symbols, one may easily verify that
$\sum_{j\in\mathcal{J}_1}\Delta_j=\sum_{j\in\mathcal{J}_2}\Delta_j=12$, or also $\sum_{j=1}^{12}\Delta_j=24=Nn$, so that $\mu=2=n$ in all cases.
\end{example}

\section{Variance and the minimality problem}
\label{sec:minvariance}

It is now natural to introduce the following definitions.
\begin{definition}[Central moment of a cycle] For a given cycle $C\in\Omega(\mathcal{A},\mathbf{m})$, let
\begin{equation}
\mathcal{M}^\ast_p(C):=\frac{1}{N}\sum_{j=1}^{N}(\Delta_j-\mu)^p
\end{equation}
be its central moment of order $p$.
\end{definition}

\begin{definition}[Variance of a cycle]\label{def:variance}
For a given cycle $C\in\Omega(\mathcal{A},\mathbf{m})$, let
\begin{equation}\label{eqn:variance}
\sigma^2(C):=\mathcal{M}^\ast_2(C):=\frac{1}{N}\sum_{j=1}^{N}(\Delta_j-\mu)^2
\end{equation}
be its variance.
\end{definition}
The second central moment is termed variance because it measures the dispersion around the mean of the distribution of the distances, in steps, between instances of the same symbols.
Following a standard argument, central moments may be connected to (raw) moments of the kind (\ref{eqn:moment}), for which decomposition in sub-moments holds. In the case of variance, the following simple result holds.

\begin{lemma}
For any cycle $C\in\Omega(\mathcal{A},\mathbf{m})$,
\begin{equation}
\sigma^2(C)=\mathcal{M}_2(C)-n^2.
\end{equation}
\end{lemma}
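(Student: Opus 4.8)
The plan is to reproduce the classical ``variance equals second moment minus mean squared'' identity, adapted to the present notation, with the only genuine input being the Round trip Lemma \ref{th:roundtrip}. First I would start from the definition \eqref{eqn:variance} and expand the binomial square inside the sum, writing $(\Delta_j-\mu)^2 = \Delta_j^2 - 2\mu\,\Delta_j + \mu^2$. Then, using linearity of the (finite) summation and pulling the constant $\mu$ out of the sum, I would split $\sigma^2(C)$ into three pieces:
\begin{equation*}
\sigma^2(C) = \frac{1}{N}\sum_{j=1}^N \Delta_j^2 \;-\; 2\mu\cdot\frac{1}{N}\sum_{j=1}^N \Delta_j \;+\; \mu^2,
\end{equation*}
where the last term used that $\tfrac{1}{N}\sum_{j=1}^N \mu^2 = \mu^2$ since $\mu$ does not depend on $j$.

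Next I would recognise the first piece as $\mathcal{M}_2(C)$ by Definition \ref{def:moment} and the factor $\tfrac{1}{N}\sum_{j=1}^N \Delta_j$ in the middle piece as $\mathcal{M}_1(C) = \mu(C)$ by Definition \ref{def:mean}. At this point the expression reads $\mathcal{M}_2(C) - 2\mu\cdot\mu + \mu^2 = \mathcal{M}_2(C) - \mu^2$. Finally I would invoke the Round trip Lemma \ref{th:roundtrip}, which gives $\mu(C) = n$ for every admissible cycle, so that $\mu^2 = n^2$, yielding $\sigma^2(C) = \mathcal{M}_2(C) - n^2$ as claimed.

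I do not expect any real obstacle here: the argument is a two-line algebraic manipulation, and the substantive fact (the invariance and integrality of the mean) has already been established in Lemma \ref{th:roundtrip}. The only point deserving a word of care is the legitimacy of treating $\mu$ as a constant when extracting it from the summation and when averaging $\mu^2$ — but this is immediate precisely because Lemma \ref{th:roundtrip} guarantees $\mu$ is a fixed number ($=n$), independent of the cycle and of the index $j$. One could equally well phrase the proof via sub-moment decomposition (Definition \ref{def:submoment}), but the direct expansion is cleaner and the sub-moment route offers no advantage for this particular identity.
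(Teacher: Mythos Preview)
Your proof is correct and follows essentially the same approach as the paper: expand the square in the definition of variance, recognise the resulting terms as $\mathcal{M}_2(C)$ and $\mu(C)$, and then invoke the Round trip Lemma to replace $\mu$ by $n$. The paper's version is terser but the logical content is identical.
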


\begin{proof}
Squaring the r.h.s. in (\ref{eqn:variance}),
$$
\sigma^2(S)
=\frac{1}{N}\sum_{j=1}^{N}\Delta_j^2
-2\mu\frac{1}{N}\sum_{j=1}^{N}\Delta_j
+\mu^2,
$$
from which the result follows thanks to the definition (\ref{eqn:mean}) of mean, the latter being equal to $n$ for round trip Lemma \ref{th:roundtrip}.
\end{proof}

We now appreciate a fundamental consequence of the round trip Lemma \ref{th:roundtrip}:
due to the invariance of $\mu$, variance $\sigma^2(C)$ and the second moment $\mathcal{M}_2(C)$ only differ by an additive constant, pointless in optimization problems. This implies the following

\begin{corollary}\label{th:optidentity}
Let $(\mathcal{A},\mathbf{m})$ be a cyclic sequencing problem.
The search space being $\Omega(\mathcal{A},\mathbf{m})$, minimization of variance is equivalent to minimization of the second moment, or
\begin{equation}
\arg\min\sigma^2(C)=\arg\min\mathcal{M}_2(C),
\end{equation}
subject to any common (and possibly empty) set of constraints applied to both problems.
\end{corollary}

We are interested in cycles of minimal variance, that is, to solve
\begin{equation}\label{eqn:varminorig}
\min_{\tau\in\textrm{Sym(N)}}\sigma^2(\hat{C}\circ\tau).
\end{equation}
Problem \eqref{eqn:varminorig} can be explicitly stated  as a mixed integer quadratic program (MIQP) as follows.
Let us first consider the ordered sequence $\hat{\mathcal{J}}=[1,2,\ldots,i,\ldots,N]$, where $i$ stands for $[i]_N$, for brevity.
The base cycle $\hat{C}$ maps $\hat{\mathcal{J}}$ to \eqref{eqn:base}.
Given the repetitions of symbols of $\mathcal{A}$ according to multiplicities $\mathbf{m}$, it is convenient to partition $\hat{\mathcal{J}}$ so to put in evidence those indexes mapped by $\hat{C}$ to a same $a_k\in\mathcal{A}$, $\forall k\in\{1,\ldots,n\}$.
To this end, let $\hat{m}_k:=\sum_{h=1}^k m_{h}$, $\forall k\in\{1,\ldots,n\}$, so that $\hat{m}_n=N$.
Then, let $\check{m}_1:=1$ and $\check{m}_k:=\hat{m}_{k-1}+1,\forall k\in\{2,\ldots,n\}$.
It is apparent that $\check{m}_k$ and $\hat{m}_k$ mark, respectively, the initial and final position in $\hat{\mathcal{J}}$ of all indexes mapped by $\hat{C}$ to symbol $a_k\in\mathcal{A}$.
Consequently, $\hat{\mathcal{J}}_k:=[\check{m}_k,\check{m}_k+1,\ldots,\hat{m}_k-1,\hat{m}_k]=\hat{C}^{-1}(a_k)$, $\forall k\in\{1,\ldots,n\}$ and
$\hat{\mathcal{J}}=\hat{\mathcal{J}}_1\sqcup \hat{\mathcal{J}}_2\sqcup \ldots \sqcup \hat{\mathcal{J}}_n$.
Clearly, $|\hat{\mathcal{J}}_k|=m_k$ by construction.

Let $\mathcal{J}:=\tau(\hat{\mathcal{J}})=[i_1,i_2,i_3,\dots,i_N]$, for suitable $i_1,i_2,i_3,\dots,i_N$, be the permuted image of $\hat{\mathcal{J}}$ through $\tau$.
Without loss of generality, to reduce the number of equivalent permutations up to indistinguishable reshuffles of instances of a same symbol, we assume that $\tau(1)=1$ (i.e., $i_1=1$ is chosen as origin in the cycle) and that for any pair $r,s$ in a same $\hat{\mathcal{J}}_k$, $\tau(r)$ precedes $\tau(s)$ in $C$.
Let $x_{ij}$, $i,j \in \hat{\mathcal{J}}$, $i \neq j$, be a (0-1) binary variable equal to 1 if $j$ immediately follows $i$ in $\mathcal{J}$, 0 otherwise. In addition, let $\theta_i$, $i \in \hat{\mathcal{J}}$, be an arbitrary real number, such that if item $i\in \hat{\mathcal{J}}$ is sequenced $h$th after item 1 in $\mathcal{J}$, then $\theta_i=h=\tau(i)-1$, with $\theta_1=0=\tau(1)-1$. The term $\theta_i$ suggests the usual angular coordinate to span the circle.

The mathematical formulation of \eqref{eqn:varminorig} is as follows:
\begin{alignat}{3}
\min  & \sum_{i \in \hat{\mathcal{J}}} \Delta^2_i \label{F.Obj}\\
 s.t. & \sum_{j \in \hat{\mathcal{J}}}x_{ij}=1, & \forall i \in \hat{\mathcal{J}} \label{F.DegreeA}\\
      & \sum_{i \in \hat{\mathcal{J}}}x_{ij}=1, & \forall j \in \hat{\mathcal{J}} \label{F.DegreeB}\\
      & \theta_i - \theta_j + N x_{ij} \leq N-1, & \quad \forall i,j \in \hat{\mathcal{J}}\setminus\{1\}, i \neq j \label{F.MTZ}\\
      & \theta_i \leq \theta_{i+1}, & \forall i \in \hat{\mathcal{J}}_k \setminus \{\hat{m}_k\}, \forall k\in\{1,\ldots,n\} \label{F.uk}\\
      & \Delta_i = \theta_{i+1}-\theta_i, & \forall i \in \hat{\mathcal{J}}_k\setminus \{\hat{m}_k\}, \forall k\in\{1,\ldots,n\} \label{F.dka}\\
      & \Delta_{\hat{m}_k} = N-\theta_{\hat{m}_k}+\theta_{\check{m}_k}, & \forall k\in\{1,\ldots,n\}\label{F.dkb}\\
      & x_{ij} \in \{0,1\}, &\forall i, j  \in \hat{\mathcal{J}} \label{F.IntX}\\
      & \theta_{i} \geq 0, & \forall i \in \hat{\mathcal{J}} \label{F.ConU}\\
      & \Delta_i \geq 0, & \forall i \in \hat{\mathcal{J}}. \label{F.ConD}
\end{alignat}

The plurality of nonnegative variables $\Delta_i$ in program \eqref{F.Obj}-\eqref{F.ConD} represent the distance in $\mathcal{J}$ between item $i$ and the next item mapped to the same symbol and therefore correspond exactly with the distance $\Delta_j$ as in Definition \ref{def:distance}, with $j=\tau(i)$.
The objective function \eqref{F.Obj} states to minimize the second moment (corresponding to total variance minimization, for Corollary \ref{th:optidentity}). Constraints \eqref{F.DegreeA},\eqref{F.DegreeB} together with constraints \eqref{F.MTZ} impose a cycle in solution. Constraints \eqref{F.MTZ} also define the values of variables $\theta_i$.
Constraints \eqref{F.dka},\eqref{F.dkb} define the values of distance variables $\Delta_i$, $\forall i \in \hat{\mathcal{J}}$.

\begin{remark}
Formulating the problem in the domain of $\tau$, i.e., in $\hat{\mathcal{J}}$, instead of in the image of $\tau$, i.e., in $\mathcal{J}$, brings the benefit that all the involved summations and references are in terms of index sets known a priori than solving the problem.
\end{remark}

\section{The Euclidean Sequencing Algorithm (ESA)}
\label{sec:algorithm}

Let us consider the case of binary cycles of minimal variance.
Despite the presence of integrality constraints, this special case is surprisingly simple, to the extent that it may be directly solved algorithmically, according to a procedure resembling the celebrated one for the gcd of two integers and attributed to Euclid.
Therefore the algorithm, which holds for a cyclic sequencing problem $(\mathcal{A},\mathbf{m})$ with $n=2$,
is termed the \emph{Euclidean Sequencing Algorithm} (ESA).
Before presenting the computational machinery, we discuss the underlying rationale by means of examples.
The proof for minimal variance follows in the next section.
For the sake of brevity, we shall use the formal product formalism $abc=[a,b,c]$ and introduce formal powers of the kind $a^2bc^3=[a,a,b,c,c,c]$. Clearly, this formal product is not commutative.

\begin{example}\label{eg:1}
The cyclic sequencing problem $(\mathcal{A},\mathbf{m})=([a_1],[m_1])$, with $n=1$ and $N=m_1$, admits only the cycle $C=a_1^N$, which is trivially of minimal variance.
\end{example}

\begin{example}\label{eg:2}
Let us start with an intuitive case.
Let $(\mathcal{A},\mathbf{m})=([a_1,a_2],[8,4])$, meaning 8 instances of symbol $a_1$ and 4 instances of symbol $a_2$, for a total $N=12$ items. Here and in the sequel, let $A$ (possibly with a subscript) denote the symbol with the greatest number of repetitions (here, $a_1$) and $B$ (possibly with a subscript) the other (here $a_2$).
Intuitively, minimal variance means the regular repetition of a fixed scheme.
Since $4$ is a perfect divisor of $8$ and $8=4\cdot 2$, this is achieved by, e.g., $C=[A,A,B,A,A,B,A,A,B,A,A,B]=AABAABAABAAB$, or by other cycles obtained by applying a rotation, like, e.g., $ABAABAABAABA$.
\end{example}

\begin{figure}[t!]
\centering
\includegraphics[width=0.7\linewidth]{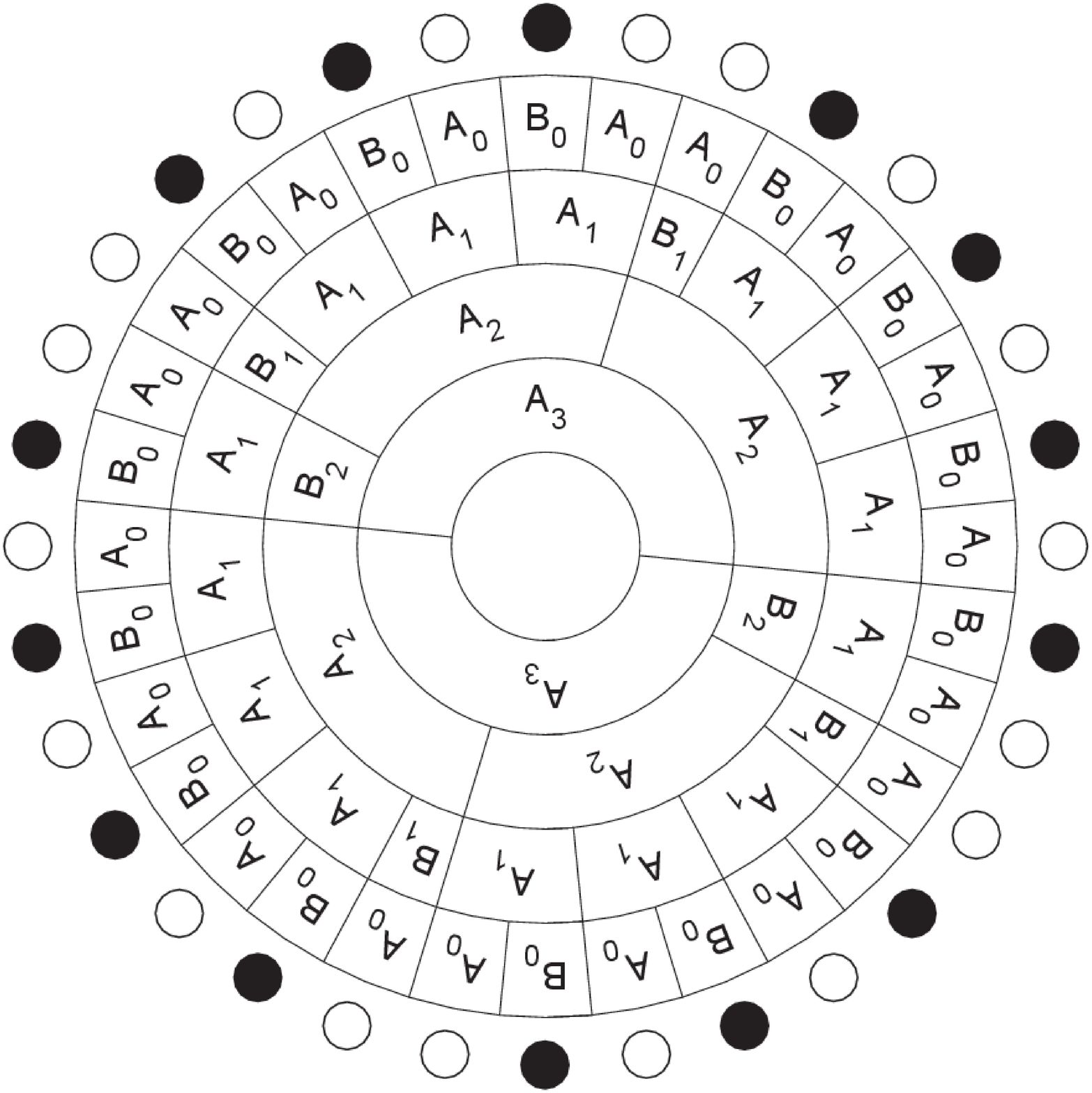}
\caption{The ESA applied to $(\mathcal{A},\mathbf{m})=([\circ,\bullet],[18,14])$.}
\label{fig:ruota}
\end{figure}

\begin{example}\label{eg:3}
We now face a case where perfect divisibility is not met; see Table \ref{tab:ESA}.
Let $(\mathcal{A}_1,\mathbf{m}_1)=([a_1,a_2],[18,14])$, meaning 18 instances of symbol $a_1$ (denoted by $A_0$) and 14 instances of symbol $a_2$ (denoted by $B_0$), for a total $N_1=32$ items.
We notice that $18=1\cdot 14+4$, where $1=\left\lfloor{18/14}\right\rfloor$ is the quotient of the integer division $18/14$ and $4\equiv 18\mod 14$ is the rest of that division (it is an elementary result that this decomposition is unique and that the rest is always less than the divisor).
We can now think of addressing the smaller cyclic sequencing problem $(\mathcal{A}_2,\mathbf{m}_2):=([A_1,B_1],[14,4])$, with 14 repetitions of the new symbol $A_1:=[A_0,B_0]=A_0B_0$, using all instances of $B_0$ but only 14 out of 18 instances of $A_0$, and the remaining 4 repetitions of $B_1:=A_0$ symbol.
By reasoning similarly, since $14=3\cdot 4+2$, one gets the smaller cyclic sequencing problem $(\mathcal{A}_3,\mathbf{m}_3):=([A_2,B_2],[4,2])$, with 4 instances of $A_2:=A_1^3B_1$ and 2 instances of $B_2:=A_1$.
We have met the perfect divisibility condition, since $4=2\cdot 2$.
Then one either closes as in Example \ref{eg:2}, or a further iteration is worked out with $(\mathcal{A}_4,\mathbf{m}_4):=([A_3],[2])$, where $A_3:=A_2^2B_2$, so to close as in Example \ref{eg:1}.
In both cases one finds the cycle $C=A_3^2$, that is, $C=(A_2^2B_2)^2$, which has minimal variance in, resp., $\Omega(\mathcal{A}_4,\mathbf{m}_4)$ or $\Omega(\mathcal{A}_3,\mathbf{m}_3)$.

Backtracking is straightforwardly accomplished by backward substitution of the symbols progressively defined, and we claim that variance minimality be preserved for every $\Omega(\mathcal{A}_i,\mathbf{m}_i)$.
A possible way to keep a log during algorithm execution is reported in Table \ref{tab:ESA}.
In the case at hand,
$$
\begin{array}{lcl}
C
& = & A_3^2 \\
& = & (A_2^2B_2)^2 \\
& = & ((A_1^3B_1)^2A_1)^2 \\
& = & (((A_0B_0)^3A_0)^2A_0B_0)^2 \\
& = & a_1a_2a_1a_2a_1a_2a_1a_1a_2a_1a_2a_1a_2a_1a_1a_2 \\
&   & a_1a_2a_1a_2a_1a_2a_1a_1a_2a_1a_2a_1a_2a_1a_1a_2. \\
\end{array}
$$
\end{example}

\begin{table}[t]
\centering
\begin{tabular}{clrrrrrll}
\hline
\\
$i$ & $\mathcal{A}_i$ & $N_i$ & $P_i$ & $D_i$ & $Q_i$ & $R_i$ & $A_i$ & $B_i$ \\
\\
\hline
\\
0 &             &    &    &    &   &   & $A_0=a_1$       & $B_0=a_2$    \\
1 & $[A_0,B_0]$ & 32 & 18 & 14 & 1 & 4 & $A_1=A_0^1B_0$  & $B_1=A_0$  \\
2 & $[A_1,B_1]$ & 18 & 14 &  4 & 3 & 2 & $A_2=A_1^3B_1$  & $B_2=A_1$  \\
3 & $[A_2,B_2]$ &  6 &  4 &  2 & 2 & 0 & $A_3=A_2^2B_2$  \\
\\
\hline
\\
  &               &    &    &    &   &   & $C=A_3^2$ \\
\\
\hline
\end{tabular}
\caption{Application of the Euclidean Sequencing Algorithm (ESA).}
\label{tab:ESA}
\end{table}

A graphical representation of the ESA applied to $(\mathcal{A},\mathbf{m})=([\circ,\bullet],[18,14])$, i.e., $a_1:=\circ$ and $a_2:=\bullet$, is shown in Figure \ref{fig:ruota}, where the stages of the iterative algorithm are shown in progressively smaller and inner annuli.
In the rationale behind this simple machinery resides the essence of the theory being developed in the sequel.
One may notice that:
As for the more abundant symbol, here $a_1$, there are 14 instances with $\Delta_j=2$ (i.e., they directly precede instances of $a_2$ followed by one other instance of $a_1$) and 4 instances with $\Delta_j=1$ (i.e., they directly precede some other instance of $a_1$);
As for the less abundant symbol, here $a_2$, there are 4 instances with $\Delta_j=3$ and 10 instances with $\Delta_j=2$, the former having exactly one distance step more than the latter because of the presence of exactly 4 instances of $B_1=A_0=a_1$ symbol (see Figure \ref{fig:ruota}, second annulus from the outside), which increase exactly by one the distance between two consecutive instances of $B_0=a_2$ symbol (see Figure \ref{fig:ruota}, first annulus from the outside);
Moreover, said 4 instances of $B_1=A_0=a_1$ symbol are due to the repetitions of $A_3$ symbol in $C$, which are in the number of $D_3=\gcd(18,14)=2$, multiplied by the repetitions of $A_2$ symbol in $A_3$, which are in the number of $Q_3=\left\lfloor{4/2}\right\rfloor=2$.

\begin{algorithm}[H]\label{alg:esa}
 \vspace{10pt}
 \KwData{$(\mathcal{A},\mathbf{m})$}
 \KwResult{$C\in\Omega(\mathcal{A},\mathbf{m})$ such that $\sigma^2(C)=\min$; see Section \S\ref{sec:optimality}}
 \vspace{10pt}
 // initialization\\
 $A_0=a_{\arg\max\{m_1,m_2\}}$; $P_1=|\mathcal{J}_{\arg\max\{m_1,m_2\}}|$\;
 $B_0=a_{\arg\min\{m_1,m_2\}}$; $D_1=|\mathcal{J}_{\arg\min\{m_1,m_2\}}|$\;
 $i=0$; $R_i=1$\;
 \vspace{10pt}
 // looping is iterated until a null rest is found\\
 \While{$R_i>0$}{
  $i=i+1$\;
  $N_i=P_i+D_i$\;
  $Q_i=\left\lfloor{P_i/D_i}\right\rfloor$\;
  $R_i=P_i-Q_iD_i$\;
  $A_{i}=A_{i-1}^{Q_{i}}B_{i-1}$\;
  $B_{i}=A_{i-1}$\;
  $P_{i+1}=D_i$\;
  $D_{i+1}=R_i$\;
 }
 \vspace{10pt}
 // finalization\\
 $C=A_i^{D_i}$;
 \vspace{10pt}
 \caption{The Euclidean Sequencing Algorithm (ESA).}
\end{algorithm}
\vspace{10pt}

In general terms, the ESA may be described as follows; see Algorithm \ref{alg:esa}, where a pseudo-code implementation is sketched.
As for the initialization, $A_0$ denotes the symbol $a_k$, $k\in\{1,2\}$, with the highest multiplicity, and $B_0$ denotes the other symbol.
(If $m_1=m_2$, $A_0$ and $B_0$ may be used in any of the two possibilities, without prejudicing the algorithm.)
Let $P_1$ (resp., $D_1$) be the multiplicity of the symbol denoted by $A_0$ (resp., $B_0$).

The algorithm then proceeds iteratively.
At the $i$th iteration, a cyclic sequencing problem $(\mathcal{A}_i,\mathbf{m}_i)$ is adressed, where $\mathcal{A}_i:=[A_{i-1},B_{i-1}]$ is the alphabet and $\mathbf{m}_{i}:=[P_i,D_i]$ is the multiplicity vector.
Let $N_i:=P_i+D_i$ be the total number of items.
The quotient and the rest of the integer division $P_i/D_i$ are denoted by $Q_i:=\left\lfloor{P_i/D_i}\right\rfloor$ and $R_i:=P_i-Q_iD_i$, respectively.
The decomposition $P_i=Q_iD_i+R_i$, with $0\le R_i<D_i$, always exists and is unique.
If $R_i>0$, the new symbols $A_{i}:=A_{i-1}^{Q_{i}}B_{i-1}$ and $B_{i}:=A_{i-1}$ are defined, with multiplicities $P_{i+1}=D_i$ and $D_{i+1}=R_i$, respectively, and a new iteration is run.

Otherwise, i.e., when $R_i=0$, the stopping criterion is reached and the final cycle $C\in\Omega(\mathcal{A}_i,\mathbf{m}_i)$ is obtained as $C:=A_i^{D_i}=A_i^{\gcd(m_1,m_2)}$, where $D_i:=\gcd(m_1,m_2)$ (at the last iteration only) is deduced by comparison with the Euclidean algorithm for the gcd of two integers.

\begin{remark}
Since $R_i<D_i$, at any iteration the algorithm always attributes $A_i$ to the symbol with more instances, because
\begin{equation}\label{eqn:ESA:obviety}
D_i=R_{i-1}<D_{i-1}=P_i.
\end{equation}
\end{remark}

\begin{remark}
As intuitive, the cyclic sequencing problem size decreases iteration wise (stagnation is not possible) and the algorithm terminates in a finite number of steps.
In fact, (\ref{eqn:ESA:obviety}) and $R_i<D_i$ yield the strict inequality
\begin{equation}
|\mathbf{m}_{i+1}|=N_{i+1}=P_{i+1}+D_{i+1}=D_i+R_i<P_i+D_i=N_i=|\mathbf{m}_i|.
\end{equation}
\end{remark}

\begin{remark}\label{rem:gcd}
It is well-known from the theory of the classical Euclidean algorithm \cite{Euclid}, and easily proved, that $\gcd(m_1,m_2)=\gcd(P_i,D_i)$, $i\in\{1,\ldots,s\}$, where $s$ is the number of steps required to reach the end.
\end{remark}

\begin{definition}
Some instances of a same symbol $a_k\in\mathcal{A}$ are said to be unclustered in a cycle $C\in\Omega(\mathcal{A},\mathbf{m})$ if they are well separated in $C$, i.e., if $\Delta_j>1$ for all such instances.
\end{definition}

\begin{remark}\label{rem:unclustered}
Clearly, if some symbol $a\in\mathcal{A}$ is unclustered in the cycles $C_1,\ldots,C_r$, then $a$ is unclustered also in any product cycle $\prod_{k=1}^r C_k^{p_k}=C_1^{p_1}\cdot\ldots\cdot C_r^{p_r}$, for $p_k\in\mathbb{N}_0$, $k\in\{1,\ldots,r\}$, including the case of $C_1C_2$.
\end{remark}

\begin{lemma}\label{th:unclustered}
The Euclidean Sequencing Algorithm sets unclustered all instances of the less abundant symbol.
\end{lemma}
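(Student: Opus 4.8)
The plan is to track the less abundant symbol $B_0$ through the backward substitutions that assemble the output cycle and to show that in the resulting word no two copies of $B_0$ are adjacent; by Definition~\ref{def:distance} this is exactly the statement that $\Delta_j>1$ for every instance of $B_0$, i.e.\ that $B_0$ is unclustered. (If $n=1$ there is nothing to prove, and if $m_1=m_2$ the labelling of $A_0,B_0$ is immaterial. At every iteration $Q_i\ge 1$: for $i=1$ because $P_1\ge D_1$ by the ESA's choice of $A_0$, and for $i\ge 2$ because \eqref{eqn:ESA:obviety} gives $D_i<P_i$.)

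First I would make the recursion explicit. For each iteration $i$ let $\varphi(A_i)$ and $\varphi(B_i)$ denote the words over the original binary alphabet $\{A_0,B_0\}=\{a_1,a_2\}$ obtained by fully expanding the symbols $A_i,B_i$ through the ESA recurrences of Algorithm~\ref{alg:esa}; thus $\varphi(A_0)=A_0$, $\varphi(B_0)=B_0$, and for $i\ge 1$ one has $\varphi(A_i)=\varphi(A_{i-1})^{Q_i}\varphi(B_{i-1})$ and $\varphi(B_i)=\varphi(A_{i-1})$. The core of the argument is the following invariant, to be proved by induction on $i\ge 1$: both $\varphi(A_i)$ and $\varphi(B_i)$ begin with the letter $A_0$ and contain no factor $B_0B_0$ (no two consecutive copies of $B_0$). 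The base case is immediate, since $\varphi(A_1)=A_0^{Q_1}B_0$ with $Q_1\ge 1$ begins with $A_0$ and has a single $B_0$, while $\varphi(B_1)=A_0$. For the inductive step, $\varphi(A_i)$ is the concatenation of $Q_i\ge 1$ copies of $\varphi(A_{i-1})$ followed by one copy of $\varphi(B_{i-1})$; by the inductive hypothesis each of these blocks is $B_0B_0$-free and begins with $A_0$, so the letter immediately to the right of every concatenation junction is $A_0$ and no $B_0B_0$ can straddle a junction. Hence $\varphi(A_i)$ is $B_0B_0$-free, and it begins with $A_0$ because $\varphi(A_{i-1})$ does; the statement for $B_i$ follows from $\varphi(B_i)=\varphi(A_{i-1})$.

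Finally, the ESA returns $C=A_s^{D_s}$ at the terminating iteration $s$ (Remark~\ref{rem:gcd}), so $\varphi(C)=\varphi(A_s)^{D_s}$. Regarded as a cyclic word, $\varphi(C)$ is $D_s$ copies of $\varphi(A_s)$ arranged around the circle: inside each copy there is no $B_0B_0$ by the invariant, and each junction between consecutive copies --- the wrap-around junction included --- is immediately followed by the first letter of $\varphi(A_s)$, which is $A_0$, so again no $B_0B_0$ straddles a junction. Thus $\varphi(C)$ has no two consecutive copies of $B_0$, i.e.\ $\Delta_j\ge 2$ for every instance of $B_0$ (when $B_0$ occurs only once, $D_1=1$, the single instance has $\Delta_j=N\ge 2$ trivially). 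This is precisely the assertion that the ESA leaves all instances of the less abundant symbol unclustered.

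I expect this to be bookkeeping rather than genuinely hard; the one idea that carries the proof is the invariant that every $\varphi(A_i)$ begins with $A_0$, which disposes of all concatenation junctions at once --- those internal to the $A_i$ as well as those between the $D_s$ copies that make up $C$. The only points that need a little care are the cyclic closure in the last step and the systematic use of $Q_i\ge 1$, guaranteed throughout by the ESA's convention that $A_0$ names the more abundant symbol.
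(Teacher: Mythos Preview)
Your proof is correct and considerably more transparent than the paper's own argument. The paper proceeds by induction on the number $s$ of ESA iterations, but at the inductive step it does not work with the words $A_i,B_i$ directly; instead it manufactures two auxiliary cyclic sequencing problems $(\mathcal{A}',\mathbf{m}')$ and $(\mathcal{A}'',\mathbf{m}'')$ that terminate in $s$ and $s-1$ steps respectively and whose ESA outputs happen to equal $A_s^{Q_{s+1}}$ and $B_s$, applies the inductive hypothesis to each auxiliary problem, and then invokes Remark~\ref{rem:unclustered} to glue the pieces into $C=(A_s^{Q_{s+1}}B_s)^{\gcd(m_1,m_2)}$. Your route bypasses this detour entirely: the single word-combinatorial invariant ``$\varphi(A_i)$ and $\varphi(B_i)$ begin with $A_0$ and contain no factor $B_0B_0$'' propagates trivially through the recurrence $A_i=A_{i-1}^{Q_i}B_{i-1}$, $B_i=A_{i-1}$, and the ``begins with $A_0$'' clause is exactly what kills all junction cases (internal, between the $D_s$ copies, and at the cyclic wrap-around) in one stroke. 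The paper's construction of auxiliary problems is doing the same job implicitly --- it is a way of arranging that the inductive hypothesis speaks about $A_s$ and $B_s$ separately --- but your formulation makes the mechanism explicit and avoids the bookkeeping of reverse-engineering $\mathbf{m}'$ and $\mathbf{m}''$.
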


\begin{proof}
Induction on the number $s$ of steps required by the algorithm to end.
As for the base of the induction (i.e., $s=1$, standing for a case with perfect divisibility),
$$
C=A_1^{\gcd(m_1,m_2)}=\underbrace{A_1\ldots A_1}_{\gcd(m_1,m_2)}
=
\underbrace{A_0\ldots A_0}_{Q_1\geq 1}B_0
\underbrace{A_0\ldots A_0}_{Q_1\geq 1}B_0
\ldots
\underbrace{A_0\ldots A_0}_{Q_1\geq 1}B_0,
$$
so that the less abundant symbol, $B_0$, is always unclustered in $C$.
To prove the induction step, we consider a case requiring $s+1$ iterations and notice that, similarly,
$$
C=\underbrace{A_{s+1}\ldots A_{s+1}}_{\gcd(m_1,m_2)}
=
\underbrace{A_s\ldots A_s}_{Q_{s+1}\geq 1}B_s
\underbrace{A_s\ldots A_s}_{Q_{s+1}\geq 1}B_s
\ldots
\underbrace{A_s\ldots A_s}_{Q_{s+1}\geq 1}B_s.
$$

We now want to build a first auxiliary cyclic sequencing problem $(\mathcal{A}',\mathbf{m}')$ characterized in that:
i) $s$ steps are required to end, i.e., $R_s=0$;
ii) For $i\in\{1,\ldots,s\}$, the very same $Q_i$ (and then the very same $A_i$ and $B_i$) are produced as in $(\mathcal{A},\mathbf{m})$;
iii) $R_{s-1}=Q_{s+1}$, so that the cycle $C'=A_s^{D_s}=A_s^{R_{s-1}}=A_s^{Q_{s+1}}$ is produced by the ESA.
As noticed in Remark \ref{rem:gcd}, the latter requirement implies that $Q_{s+1}=\gcd(m'_1,m'_2)=\gcd(P_s,D_s)$, where $m'_1$ and $m'_2$ are the components of $\mathbf{m}'$.
Let us then set $D_s=x$, so that $P_s=Q_sD_s+R_s=Q_sx$ and $Q_{s+1}=\gcd(Q_sx,x)=x$.
Consequently, the last iteration (i.e., $i=s$) is fully determined.
Backtracking, one then computes $D_{i-1}=P_i$, $R_{i-1}=D_i$, $P_{i-1}=Q_{i-1}D_{i-1}+R_{i-1}$, for $i$ from $s$ back to 2, respecting the required constraints and thus producing the sought for auxiliary problem $(\mathcal{A}',\mathbf{m}')$.
All of the instances of $B_0$ are unclustered in $C'$ by the inductive hypothesis.

Similarly, we now want to build a second auxiliary cyclic sequencing problem $(\mathcal{A}'',\mathbf{m}'')$ characterized in that:
i) $s-1$ steps are required to end, i.e., $R_{s-1}=0$;
ii) For $i\in\{1,\ldots,s-1\}$, the very same $Q_i$ (and then the very same $A_i$ and $B_i$) are produced as in $(\mathcal{A},\mathbf{m})$;
iii) $R_{s-2}=1$, so that the cycle $C''=A_{s-1}^{D_{s-1}}=A_{s-1}^{R_{s-2}}=A_{s-1}=B_s$ is produced by the ESA.
As noticed in Remark \ref{rem:gcd}, the latter requirement implies that $1=\gcd(m''_1,m''_2)=\gcd(P_{s-1},D_{s-1})$, where $m''_1$ and $m''_2$ are the components of $\mathbf{m}''$.
Let us then set $D_{s-1}=x$, so that $P_{s-1}=Q_{s-1}D_{s-1}+R_{s-1}=Q_{s-1}x$ and $1=\gcd(Q_{s-1}x,x)=x$.
Consequently, the last iteration (i.e., $i=s-1$) is fully determined.
Backtracking, one then computes $D_{i-1}=P_i$, $R_{i-1}=D_i$, $P_{i-1}=Q_{i-1}D_{i-1}+R_{i-1}$, for $i$ from $s-1$ back to 2, respecting the required constraints and thus producing the sought for auxiliary problem $(\mathcal{A}'',\mathbf{m}'')$.
All of the instances of $B_0$ are unclustered in $C''$ by the inductive hypothesis.

As noticed in Remark \ref{rem:unclustered}, all of the instances of $B_0$ are unclustered in $(C'C'')^{\gcd(m_1,m_2)}=(A_s^{Q_{s+1}}B_s)^{\gcd(m_1,m_2)}=
A_{s+1}^{\gcd(m_1,m_2)}=C$.
\end{proof}

\begin{corollary}\label{th:more}
Let $(\mathcal{A},\mathbf{m})$ be a cyclic sequencing problem with $n=2$.
Then the Euclidean Sequencing Algorithm arranges the \textbf{more abundant symbol} in such a way that $|m_1-m_2|$ instances have $\Delta_j=1$, while the remaining have $\Delta_j=2$.
\end{corollary}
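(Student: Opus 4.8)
The plan is to reduce everything to Lemma~\ref{th:unclustered}. Assume without loss of generality that $m_1\ge m_2$, so that $A_0=a_1$ is the more abundant symbol, $B_0=a_2$ the less abundant one, and $N=m_1+m_2$. By Lemma~\ref{th:unclustered}, every one of the $m_2$ instances of $B_0$ in the cycle $C$ returned by the ESA satisfies $\Delta_j>1$; equivalently, no two instances of $B_0$ are adjacent in $C$.

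First I would convert this adjacency-free property into a block decomposition of $C$. Read cyclically in the conventional verse, $C$ is the concatenation of exactly $m_2$ blocks of the form $A_0^{k_i}B_0$, the $i$-th block being the maximal run of $A_0$'s immediately preceding the $i$-th occurrence of $B_0$. Each $k_i$ is a positive integer — the run is nonempty precisely because $B_0$ is unclustered — and $\sum_{i=1}^{m_2}k_i=m_1$.

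Next I would evaluate $\Delta_j$ for each instance of $A_0$ according to its place within its block. Inside a block $A_0^{k_i}B_0$, the first $k_i-1$ copies of $A_0$ are immediately followed by another $A_0$, hence $\Delta_j=1$; the last copy is followed by the lone $B_0$ of the block and then, the next block being nonempty, by an $A_0$, hence $\Delta_j=2$. So each block contributes exactly one $A_0$ with $\Delta_j=2$ and $k_i-1$ copies with $\Delta_j=1$, and no instance of $A_0$ can have $\Delta_j\ge 3$. Summing over the $m_2$ blocks then yields $\sum_{i=1}^{m_2}(k_i-1)=m_1-m_2=|m_1-m_2|$ instances of $A_0$ with $\Delta_j=1$ and $m_2=m_1-|m_1-m_2|$ remaining instances with $\Delta_j=2$, which is the claim; as a sanity check against Lemma~\ref{th:roundtrip}, $\sum_{j\in\mathcal{J}_1}\Delta_j=(m_1-m_2)\cdot 1+m_2\cdot 2=N$.

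I do not expect a genuine obstacle here once Lemma~\ref{th:unclustered} is in hand: the only points that need care are making the cyclic block decomposition precise (the wraparound, and the degenerate multiplicities $m_1=m_2$, where all $k_i=1$ and no $A_0$ has $\Delta_j=1$, and $m_2=1$, where the single $B_0$ has $\Delta_j=N$ and is still unclustered in the sense of Definition~\ref{def:distance}), and observing that the unclusteredness of $B_0$ forces $\Delta_j\le 2$ — not merely $\Delta_j\ge 1$ — for every instance of $A_0$, which is exactly what rules out any longer gap.
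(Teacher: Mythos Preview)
Your proof is correct and follows essentially the same route as the paper: both arguments invoke Lemma~\ref{th:unclustered} to guarantee that no two instances of the less abundant symbol are adjacent, and then count the $A_0$'s according to whether their immediate successor is a $B_0$ (giving $\Delta_j=2$) or another $A_0$ (giving $\Delta_j=1$). Your block decomposition $A_0^{k_i}B_0$ is a slightly more explicit packaging of the same count, but the underlying idea is identical.
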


\begin{proof}
Let $a_{k'}$ be the less abundant symbol and $a_{k''}$ the more abundant.
Owing to Lemma \ref{th:unclustered}, the $m_{k'}$ instances of $a_{k'}$ are unclustered.
Therefore, there is the same number $m_{k'}$ of $a_{k''}$ symbol instances which are exactly before an instance of $a_{k'}$ symbol.
Such $a_{k''}$ symbol instances have $\Delta_j=2$.
All others $a_{k''}$ symbol instances are clustered, and thus they have $\Delta_j=1$, and there are $m_{k''}-m_{k'}=|m_1-m_2|$ such instances.
\end{proof}

\begin{definition}\label{def:approx}
Let $(\mathcal{A},\mathbf{m})$ be a cyclic sequencing problem. If $m_k\nmid N$, let
\begin{equation}
\ell_k:=\lfloor N/m_k\rfloor,\qquad
u_k:=\lceil N/m_k\rceil,
\end{equation}
be the floor and ceiling, resp., of the ratio $N/m_k\notin\mathbb{N}$.
\end{definition}

\begin{lemma}\label{th:twoValuesOnly}
Let $C\in\Omega(\mathcal{A},\mathbf{m})$ be an admissible cycle for a cyclic sequencing problem $(\mathcal{A},\mathbf{m})$ and let $a_k\in\mathcal{A}$ be a symbol such that $m_k\nmid N$. If $C$ is such that, for symbol $a_k$, the distance $\Delta_j$ may only take the two values $\ell_k$ and $u_k$, then
\begin{equation}\label{def:numbers}
N_{\ell,k}:=m_k u_k-N,\qquad
N_{u,k}:=N-m_k\ell_k
\end{equation}
are the number of symbol $a_k$ instances with $\Delta_j=\ell_k$ and with $\Delta_j=u_k$, respectively.
\end{lemma}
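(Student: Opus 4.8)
The plan is to set up a small linear system in two unknowns, namely the number of instances of $a_k$ having distance $\ell_k$ and the number having distance $u_k$, and solve it. Write $N_{\ell,k}$ and $N_{u,k}$ for these two counts (to be shown equal to the expressions in the statement). Since, by hypothesis, every instance of $a_k$ realizes one of exactly these two distance values, and since there are $m_k$ instances of $a_k$ in total, the first equation is simply the head count
\[
N_{\ell,k}+N_{u,k}=m_k.
\]
The second equation is the Round trip Lemma \ref{th:roundtrip} applied to the symbol $a_k$: summing $\Delta_j$ over $j\in\mathcal{J}_k$ gives $N$, and because each term is either $\ell_k$ or $u_k$ this sum equals $\ell_k N_{\ell,k}+u_k N_{u,k}$, hence
\[
\ell_k N_{\ell,k}+u_k N_{u,k}=N.
\]

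From here it is pure elimination. Multiplying the count equation by $\ell_k$ and subtracting from the weighted equation eliminates $N_{\ell,k}$ and yields $(u_k-\ell_k)N_{u,k}=N-m_k\ell_k$; since $m_k\nmid N$ we have $u_k=\ell_k+1$ by Definition \ref{def:approx}, so $u_k-\ell_k=1$ and $N_{u,k}=N-m_k\ell_k$. Symmetrically, eliminating $N_{u,k}$ gives $N_{\ell,k}=m_k u_k-N$. (One can equally well get the second formula by substituting the first into the count equation: $N_{\ell,k}=m_k-(N-m_k\ell_k)=m_k(\ell_k+1)-N=m_k u_k-N$.) This establishes \eqref{def:numbers}.

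The only thing that needs a word of care — and it is the one place the argument could go wrong rather than being mechanical — is checking that the two formulas actually deliver legitimate (nonnegative, integer) counts, so that the system is consistent rather than vacuously "solved." Integrality is immediate since $m_k$, $\ell_k$, $u_k$, $N$ are all integers. Nonnegativity follows from the defining inequalities of floor and ceiling: $\ell_k\le N/m_k$ gives $m_k\ell_k\le N$, i.e. $N_{u,k}\ge 0$, and $N/m_k\le u_k$ gives $N\le m_k u_k$, i.e. $N_{\ell,k}\ge 0$; moreover $m_k\nmid N$ makes both inequalities strict, so in fact both counts are at least $1$, consistent with the hypothesis that both values $\ell_k$ and $u_k$ genuinely occur. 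I would close by remarking that the system has a unique solution precisely because $u_k-\ell_k=1\neq 0$, which is exactly where the hypothesis $m_k\nmid N$ is used.
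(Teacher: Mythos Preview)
Your proof is correct and follows essentially the same approach as the paper: set up the two-equation linear system from the instance count and the round trip Lemma~\ref{th:roundtrip}, use $u_k-\ell_k=1$ (from $m_k\nmid N$) to solve, and then verify that the resulting counts are positive integers via the floor/ceiling inequalities. Your write-up is slightly more explicit in the elimination step and in noting that strictness gives both counts at least $1$, but the argument is the same.
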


\begin{proof}
Notice first that, under the $m_k\nmid N$ assumption, $u_k=\ell_k+1$.
In order to grant feasibility, the linear system
$$
\left\{
\begin{array}{lllll}
N_{\ell,k}        & + & N_{u,k}     & = & m_k \\
N_{\ell,k} \ell_k & + & N_{u,k} u_k & = & N   \\
\end{array}
\right.
$$
must hold, where the first equation counts the total instances of symbol $a_k$, while the second equation expresses the round trip Lemma \ref{th:roundtrip}.
The unique solution to the system reads $N_{\ell,k}=m_k u_k-N\in\mathbb{N}$ and $N_{u,k}=N-m_k\ell_k\in\mathbb{N}$, as prospected.
From their definition, since $N,m_k,\ell_k,u_k\in\mathbb{N}\subset\mathbb{Z}$ and since $\mathbb{Z}$ is a ring, then $N_{\ell,k}\in\mathbb{Z}$ and $N_{u,k}\in\mathbb{Z}$.
Then, from the definitions of floor and ceiling, $m_ku_k > N > m_k\ell_k$, so that $N_{\ell,k}>0$ and $N_{u,k}>0$, and therefore the latter are admissible as quantities of symbol $a_k$ instances.
\end{proof}

\begin{remark}
Notice that a solution in $\mathbb{N}$ is found to a system of linear equations with coefficients in $\mathbb{N}$, which, in general, is only expected to yield a solution in $\mathbb{Q}$, not necessarily in $\mathbb{Z}$, nor even in $\mathbb{N}$.
In other terms, the problem at hand is a system of linear Diophantine equations, whose solution is fortunately simple because the system matrix has determinant $u_k-\ell_k=1$. For necessary and sufficient conditions for solution existence of systems of Diophantine equations, see \cite{Lazebnik} and references therein.
\end{remark}

\begin{remark}
Lemma \ref{th:twoValuesOnly} implies the intuitive result that, if $m_k\nmid N$, then it is not possible to have just one distance $\Delta_j$ for all instances of symbol $a_k$: at least two different distance values are necessary.
\end{remark}

\begin{corollary}\label{th:less}
Let $(\mathcal{A},\mathbf{m})$ be a cyclic sequencing problem with $n=2$ and let $a_k\in\mathcal{A}$ be the \textbf{less abundant symbol}.
Then, if $m_k\mid N$, the Euclidean Sequencing Algorithm arranges the instances of symbol $a_k$ in such a way that they all take the distance value $\Delta_j=N/m_k$.
Otherwise, i.e., if $m_k\nmid N$, the Euclidean Sequencing Algorithm arranges symbol $a_k$ in such a way that $N_{\ell,k}=m_k u_k-N$ instances take the distance value $\Delta_j=\ell_k$, while $N_{u,k}=N-m_k\ell_k$ take the distance value $\Delta_j=u_k$.
\end{corollary}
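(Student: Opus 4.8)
The plan is to split along the stated dichotomy and to exploit a single step of the recursion built into the ESA. Write $A_0,B_0$ for the more and less abundant symbols respectively, so that $m_k=D_1$, the more abundant symbol has multiplicity $P_1$, and $N=P_1+D_1$; then $m_k\mid N$ holds if and only if $D_1\mid P_1$, that is, if and only if $R_1=0$, so $m_k\mid N$ occurs exactly when the ESA stops at its first iteration (this also covers the case $m_1=m_2$).

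In the divisible case ($R_1=0$) the algorithm returns $C=A_1^{D_1}=(A_0^{Q_1}B_0)^{D_1}$. Each instance of $B_0$ is the last letter of one $A_0^{Q_1}B_0$ block, so exactly $Q_1$ copies of $A_0$ lie between it and the next $B_0$, giving $\Delta_j=Q_1+1$ for every such instance; and $N=P_1+D_1=(Q_1+1)D_1$ makes this equal to $N/D_1=N/m_k$, as required.

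In the non-divisible case ($R_1>0$, so the ESA iterates at least once more) I would pass to the reduced problem $(\mathcal{A}_2,\mathbf{m}_2)=([A_1,B_1],[P_2,D_2])$ with $A_1=A_0^{Q_1}B_0$, $B_1=A_0$ and $P_2=D_1>R_1=D_2$; by the recursive structure of the ESA the cycle $C$ returned for $(\mathcal{A},\mathbf{m})$ is the same word as the one returned for $(\mathcal{A}_2,\mathbf{m}_2)$, and in the latter $A_1$ is strictly the more abundant symbol. The key bookkeeping step is: the instances of $B_0$ in $C$ are exactly the final letters of the $A_1$-blocks, and if $\Delta^{(A_1)}_j$ is the distance from one $A_1$-block to the next in the $\{A_1,B_1\}$-word, then between the corresponding consecutive $B_0$'s one crosses $\Delta^{(A_1)}_j-1$ blocks $B_1=A_0$ followed by the prefix $A_0^{Q_1}$ of the next $A_1$-block, so $\Delta^{(B_0)}_j=\Delta^{(A_1)}_j+Q_1$. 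Now Corollary \ref{th:more} applied to $(\mathcal{A}_2,\mathbf{m}_2)$ gives $\Delta^{(A_1)}_j\in\{1,2\}$, with $|P_2-D_2|=D_1-R_1$ instances equal to $1$ and the remaining $D_2=R_1$ equal to $2$; hence $\Delta^{(B_0)}_j$ takes only the two consecutive values $Q_1+1$ and $Q_1+2$, attained $D_1-R_1$ and $R_1$ times respectively. Finally $N/m_k=N/D_1=Q_1+1+R_1/D_1$ with $0<R_1/D_1<1$ identifies $\ell_k=Q_1+1$ and $u_k=Q_1+2$; equivalently, once the two attained values are known to be consecutive, Lemma \ref{th:twoValuesOnly} pins the counts down to $N_{\ell,k}=m_ku_k-N=D_1-R_1$ and $N_{u,k}=N-m_k\ell_k=R_1$, in agreement with the above.

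The only non-routine point, and the main obstacle, is making that ``one step of the recursion'' bookkeeping rigorous: being careful about how $C$ decomposes simultaneously into $A_0,B_0$ letters and into $A_1,B_1$ blocks, so that the identity $\Delta^{(B_0)}_j=\Delta^{(A_1)}_j+Q_1$ (including the boundary case $\Delta^{(A_1)}_j=1$, where no $B_1$ intervenes) is beyond doubt. Everything after that identity is arithmetic already packaged in Corollary \ref{th:more} and Lemma \ref{th:twoValuesOnly}.
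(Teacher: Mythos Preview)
Your proof is correct and follows essentially the same route as the paper: both arguments unwind one step of the ESA recursion, observe that each $B_0$ is the final letter of an $A_1=A_0^{Q_1}B_0$ block, and distinguish whether that block is followed by another $A_1$ (giving $\Delta_j=Q_1+1=\ell_k$) or by $B_1=A_0$ (giving $\Delta_j=Q_1+2=u_k$), then read off the counts via Lemma~\ref{th:twoValuesOnly}. The only difference is cosmetic: you additionally invoke Corollary~\ref{th:more} on the reduced problem $(\mathcal{A}_2,\mathbf{m}_2)$ to obtain the counts $D_1-R_1$ and $R_1$ directly from the ESA structure and then check they agree with Lemma~\ref{th:twoValuesOnly}, whereas the paper simply appeals to the lemma once the two possible values are identified.
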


\begin{proof}
Considering the notation used to describe the ESA, $B_0$ represents the less abundant symbol (here denoted by subscript $k$) and it only appears in $A_1=A_0^{Q_1}B_0$, occupying the last position.
Notice that $Q_1=\left\lfloor{P_1/D_1}\right\rfloor=\left\lfloor{(N-D_1)/D_1}\right\rfloor=\left\lfloor{N/m_k-1}\right\rfloor$ instances of $A_0$ precede $B_0$ in $A_1$.

In the $m_k\mid N$ case, $Q_1=N/m_k-1$ exactly, and the ESA terminates in one step.
Since only $A_1$ symbols appear in the first step, any instance of $B_0$ is followed by $N/m_k-1$ instances of $A_0$. Therefore, each $B_0$ takes distance value $\Delta_j=N/m_k\in\mathbb{N}$.

In the $m_k\nmid N$ case, in the first step of the ESA, $A_1$ may only be followed either by a $B_1$ symbol or by another $A_1$ symbol, since these are the only two symbols appearing.
In the latter possibility, the next instance of $B_0$ happens after $\left\lfloor{N/m_k-1}\right\rfloor$ instances of $A_0$, so that $B_0$ takes the distance value $\Delta_j=\left\lfloor{N/m_k-1}\right\rfloor+1=\left\lfloor{N/m_k}\right\rfloor=\ell_k\in\mathbb{N}$.
In the former possibility, another instance of $A_0$ interposes before the next instance of $B_0$ occurs, so that $B_0$ takes the distance value $\Delta_j=\ell_k+1=u_k\in\mathbb{N}$.
Lemma \ref{th:twoValuesOnly} yields the number of instances with the two possible distance values.
\end{proof}

\begin{remark}\label{divisible}
Considering the $n=2$ case, let $m_1\ge m_2$, with $m_2\mid N$.
Then, $N/m_2=(m_1+m_2)/m_2=1+m_1/m_2\in\mathbb{N}$, and $m_1/m_2\in\mathbb{N}$.
By the argument in the proof above, the cycle returned by the ESA is characterized by repeating $m_2$ times the $a_1^{m_1/m_2}a_2$ pattern, that is,
$$
C=(a_1^{m_1/m_2}a_2)^{m_2}
=\overbrace{\underbrace{(a_1\ldots a_1}_{m_1/m_2}a_2)\cdot\ldots\cdot(\underbrace{a_1\ldots a_1}_{m_1/m_2}a_2)}^{m_2}.
$$
Accordingly, $\Delta_j=m_1/m_2+1=N/m_2$, $\forall j\in\mathcal{J}_2$, i.e., for all instances of the less abundant symbol $a_2$.
\end{remark}

\section{Optimality for binary cycles}
\label{sec:optimality}

From the analysis developed in the previous section, the cycle returned by the Euclidean Sequencing Algorithm has been completely characterized.
We claim it to be optimal, meaning variance minimizing, for binary cyclic sequencing problems.
This means that the ESA provides a direct, algorithmic solution to the
mixed integer quadratic problem \eqref{F.Obj}-\eqref{F.ConD} in the $n=2$ case.
The reason for this unexpected simplicity ultimately lies in the possibility to decompose the problem symbol-wise, and in that the global minimum is attained when each contributing symbol attains its own local minimum.

\begin{lemma}\label{th:lb}
A valid lower bound $LB$ on the optimal solution cost to problem \eqref{F.Obj}-\eqref{F.ConD} can be computed as
\begin{equation}\label{eqn:lb}
LB=\sum_{k=1}^n LB_k,
\end{equation}
where
\begin{alignat}{3}
LB_k:=\min  & \sum_{i \in \hat{\mathcal{J}}_k} \Delta^2_i \label{LB.Obj}\\
      s.t.  & \sum_{i \in \hat{\mathcal{J}}_k} \Delta_i=N,  \label{LB.roundtrip}\\
            & \Delta_i \geq 0, & \forall i \in \hat{\mathcal{J}}_k, \label{LB.ConD}
\end{alignat}
and the optimal distances are computed as
\begin{equation}\label{eqn:optdistance}
\Delta_i=\frac{N}{m_k},\qquad\forall i \in \hat{\mathcal{J}}_k.
\end{equation}
\end{lemma}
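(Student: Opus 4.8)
The plan is to exhibit problem \eqref{LB.Obj}--\eqref{LB.ConD} as a \emph{relaxation} of the full program \eqref{F.Obj}--\eqref{F.ConD}, obtained by decoupling the objective symbol by symbol, and then to solve each relaxed subproblem in closed form. First I would note that the objective \eqref{F.Obj} splits along the partition $\hat{\mathcal{J}}=\hat{\mathcal{J}}_1\sqcup\cdots\sqcup\hat{\mathcal{J}}_n$, namely, by linearity of summation, $\sum_{i\in\hat{\mathcal{J}}}\Delta_i^2=\sum_{k=1}^n\sum_{i\in\hat{\mathcal{J}}_k}\Delta_i^2$, which is the sub-moment decomposition of Definition \ref{def:submoment} read through $j=\tau(i)$ (up to the $1/N$ normalization). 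Hence the MIQP objective is exactly the sum of the $n$ block objectives appearing in \eqref{LB.Obj}.

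Next I would carry out the feasibility-transfer step. Take any feasible $(x,\theta,\Delta)$ for \eqref{F.Obj}--\eqref{F.ConD}; by constraints \eqref{F.dka}--\eqref{F.dkb} the coordinates $\Delta_i$ with $i\in\hat{\mathcal{J}}_k$ are precisely the step distances of the induced admissible cycle restricted to the instances of symbol $a_k$, so the round trip Lemma \ref{th:roundtrip} gives $\sum_{i\in\hat{\mathcal{J}}_k}\Delta_i=N$ for every $k$, while non-negativity \eqref{F.ConD} is inherited. Thus the projection of any feasible point of the full program onto the coordinates indexed by $\hat{\mathcal{J}}_k$ is feasible for \eqref{LB.roundtrip}--\eqref{LB.ConD}. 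Since passing to the block problem amounts to dropping the integrality constraint \eqref{F.IntX}, the degree constraints \eqref{F.DegreeA}--\eqref{F.DegreeB} and the ordering/MTZ constraints \eqref{F.MTZ},\eqref{F.uk} (which can only enlarge the feasible region), we obtain $LB_k\le\sum_{i\in\hat{\mathcal{J}}_k}\Delta_i^2$ for that point. Summing over $k$ and taking the infimum over all feasible points of \eqref{F.Obj}--\eqref{F.ConD} yields $\sum_{k=1}^n LB_k\le$ (optimal cost), which is \eqref{eqn:lb}.

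Finally I would solve \eqref{LB.Obj}--\eqref{LB.ConD} explicitly. It is the minimization of a strictly convex quadratic over the intersection of the hyperplane $\sum_{i\in\hat{\mathcal{J}}_k}\Delta_i=N$ with the non-negative orthant, where $|\hat{\mathcal{J}}_k|=m_k$. A single Lagrange multiplier on \eqref{LB.roundtrip} (equivalently, the Cauchy--Schwarz / power-mean inequality $\sum_{i\in\hat{\mathcal{J}}_k}\Delta_i^2\ge\bigl(\sum_{i\in\hat{\mathcal{J}}_k}\Delta_i\bigr)^2/m_k$, with equality if and only if all the $\Delta_i$ are equal) identifies the stationary point $\Delta_i=N/m_k$, which already lies in the orthant so that \eqref{LB.ConD} is inactive; this gives \eqref{eqn:optdistance} and $LB_k=N^2/m_k$.

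I expect the only delicate point to be the feasibility-transfer step: one must make sure that the per-symbol round trip identity of Lemma \ref{th:roundtrip} is literally the constraint \eqref{LB.roundtrip}, i.e.\ that the relabelling $i\mapsto\tau(i)$ carries $\hat{\mathcal{J}}_k$ onto $\mathcal{J}_k$ so that the two sums coincide, and that dropping the combinatorial constraints genuinely relaxes rather than changes the problem. Everything else is routine convex optimization together with the cardinality bookkeeping $|\hat{\mathcal{J}}_k|=m_k$.
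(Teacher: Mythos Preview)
Your proposal is correct and follows essentially the same route as the paper: decompose the objective symbol by symbol, observe that each block inherits the round-trip constraint $\sum_{i\in\hat{\mathcal{J}}_k}\Delta_i=N$ and non-negativity, and solve the resulting convex subproblem by a single Lagrange multiplier to obtain $\Delta_i=N/m_k$. The only cosmetic difference is how the round-trip constraint is obtained: the paper derives it by \emph{surrogating} (i.e.\ summing) constraints \eqref{F.dka}--\eqref{F.dkb} over $i\in\hat{\mathcal{J}}_k$, which telescopes to $N$ directly within the MIQP formulation, whereas you invoke Lemma~\ref{th:roundtrip} via the semantic identification of the $\Delta_i$ with cycle distances; both arguments are equivalent here.
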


\begin{proof}
Problem \eqref{eqn:lb} is obtained from \eqref{F.Obj}-\eqref{F.ConD} by keeping the objective function and by surrogating constraints \eqref{F.dka} and \eqref{F.dkb} over $i\in \hat{\mathcal{J}}_k$ for a given $k\in\{1,\ldots,n\}$.
Problem \eqref{eqn:lb} can now be decomposed into $n$ subproblems of the kind \eqref{LB.Obj}-\eqref{LB.ConD}.
The latter may be addressed by the Lagrange multipliers method \cite{Rockafellar}.
Let
$$
L:=\sum_{i\in \hat{\mathcal{J}}_k}\Delta^2_i+2\lambda\left(N-\sum_{i\in \hat{\mathcal{J}}_k}\Delta_i\right)
$$
be the Lagrangian, where the factor $2$ has been introduced in $2\lambda$ for the sake of convenience.
Optimality is characterized by nullity of all partials, including the one with reference to Lagrange multiplier $\lambda$, that is,
$$
\left\{
\begin{array}{l}
\displaystyle 0=\frac{\partial L}{\partial\Delta_i}=2(\Delta_i-\lambda),\qquad\forall i\in \hat{\mathcal{J}}_k \\
\\
\displaystyle 0=\frac{\partial L}{\partial\lambda}=N-\sum_{i\in \hat{\mathcal{J}}_k}\Delta_i. \\
\end{array}
\right.
$$
Since the goal function is positive definite, optimality coincides with minimality.
A unique solution \eqref{eqn:optdistance} is straightforwardly found and positive, since $N>0$ and $m_k>0$.
\end{proof}

\begin{remark}
A sample cycle that satisfies lower bound \eqref{LB.Obj} for symbol $a_k$ is
$$
C=****a_k****a_k****a_k****a_k****a_k****a_k****a_k****a_k,
$$
where $N=40$, $m_k=8$, while jolly code $*$ denotes any other symbol than $a_k$. Instances of $a_k$ do not cluster together, but rather, stay isolated from each other.
\end{remark}

\begin{remark}
As from Remark \ref{divisible}, in the $n=2$ case, $a_2$ being the less abundant symbol and if $m_2\mid N$, the ESA returns a cycle such that \eqref{eqn:optdistance}, which is integer, is satisfied and thus lower bound \eqref{LB.Obj} is attained for the less abundant symbol.
\end{remark}

\begin{remark}\label{rem:equi}
Let us consider the case where all symbols have the same multiplicity, i.e., $\overline{m}=m_k=N/n$, $\forall k\in\{1,\ldots,n\}$, and the perfect divisibility hypothesis is satisfied (i.e., $N/\overline{m}=n\in\mathbb{N}$).
A cycle of the form $C=(a_1\ldots a_n)^{\overline{m}}$ (or obtained by means of repeating $\overline{m}$ times any permutation of the alphabet)
is characterized by $\Delta_j=n=N/\overline{m}$ for all instances of all symbols, i.e., \eqref{eqn:optdistance} is always verified.
Therefore, lower bound \eqref{eqn:lb} is attained and $C$ is optimal.
In the special $n=2$ case, this intuitive cycle is also returned by the Euclidean Sequencing Algorithm in one step, because $P_1=D_1=\overline{m}$, $Q_1=P_1/D_1=1$, $R_1=0$, $A_1=A_0^{Q_1}B_0=A_0B_0$ and $C=A_1^{D_1}=A_1^{\overline{m}}=(A_0B_0)^{\overline{m}}=(a_1a_2)^{\overline{m}}$, the latter coinciding with $(a_2a_1)^{\overline{m}}$ up to a rotation of the cycle by one step.
\end{remark}

In the general case, $N/m_k$ needs not be integer.
It is natural to conjecture whether the closest integers to $N/m_k$ solve the minimality problem for symbol $a_k$. This will appear to be the case.

\begin{lemma}\label{th:notsoeasylemma}
Let $(\mathcal{A},\mathbf{m})$ be a cyclic sequencing problem with $n=2$.
If $m_k\nmid N$ for some symbol $a_k\in\mathcal{A}$, then let $\ell_k:=\lfloor N/m_k\rfloor$ and $u_k:=\lceil N/m_k\rceil$ be as in Definition \ref{def:approx}.
Then, there exists a cycle $C\in\Omega(\mathcal{A},\mathbf{m})$ with $N_{\ell,k}:=m_k u_k-N$ instances of symbol $a_k$ with $\Delta_j=\ell_k$, along with $N_{u,k}:=N-m_k\ell_k$ instances of symbol $a_k$ with $\Delta_j=u_k$. The cycle $C$ as above minimizes the $(2,k)$th sub-moment $\mathcal{M}_2(a_k)$ for symbol $a_k$.
\end{lemma}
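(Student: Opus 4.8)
The plan is to reduce the claim to a scalar, discrete optimization problem in a single integer variable and to exploit convexity of the square. By Corollary~\ref{th:optidentity} (applied symbol-wise, exactly as in Lemma~\ref{th:lb}) it suffices to minimize $\mathcal{M}_2(a_k)=\frac1N\sum_{j\in\mathcal{J}_k}\Delta_j^2$ over all admissible distance profiles for symbol $a_k$, that is, over all ways of writing $N$ as an ordered sum of $m_k$ positive integers $\Delta_j$ (the round trip constraint, Lemma~\ref{th:roundtrip}); the ordering is immaterial for the objective, so we may treat the $\Delta_j$ as an unordered multiset. First I would argue that such a multiset is realizable by an actual cycle $C\in\Omega(\mathcal{A},\mathbf{m})$ whenever the prescribed distances are bounded above by $N$ and sum to $N$: place the $m_k$ instances of $a_k$ around the cycle at the prescribed gaps, and fill the remaining $N-m_k$ positions with the other symbol $a_{k'}$; since $n=2$ this always yields an admissible cycle. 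In particular the profile with $N_{\ell,k}$ gaps equal to $\ell_k$ and $N_{u,k}$ gaps equal to $u_k$ is realizable, because $N_{\ell,k}\ell_k+N_{u,k}u_k=N$ and $N_{\ell,k}+N_{u,k}=m_k$ by Lemma~\ref{th:twoValuesOnly}.

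Next I would establish optimality of this particular profile among all feasible integer profiles. The argument is the standard ``smoothing'' or majorization argument: if some feasible profile contains two gaps $\Delta_r$ and $\Delta_s$ with $\Delta_r\le\Delta_s-2$, then replacing them by $\Delta_r+1$ and $\Delta_s-1$ preserves both the count $m_k$ and the sum $N$, while strictly decreasing $\sum\Delta_j^2$, since $(\Delta_r+1)^2+(\Delta_s-1)^2-\Delta_r^2-\Delta_s^2=2(\Delta_r-\Delta_s+1)\le -2<0$. Hence any minimizer must have all gaps taking at most two consecutive integer values $\{v,v+1\}$. A minimizer therefore has $a$ gaps equal to $v$ and $m_k-a$ gaps equal to $v+1$ with $av+(m_k-a)(v+1)=N$, i.e. $m_k(v+1)-a=N$; solving, $v=\lfloor N/m_k\rfloor=\ell_k$ (using $m_k\nmid N$, so $v$ is forced) and $a=m_k(v+1)-N=m_k u_k-N=N_{\ell,k}$. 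Thus the two-value profile of the statement is the unique minimizer of $\sum\Delta_j^2$ subject to the integer round-trip constraint, which is exactly the claim about $\mathcal{M}_2(a_k)$.

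The remaining piece is to note that the cycle $C$ realizing this optimal profile exists in $\Omega(\mathcal{A},\mathbf{m})$ — which is the realizability remark above — and to record that, by Corollary~\ref{th:less}, this is precisely the profile the ESA produces for the less abundant symbol; but for the present lemma only existence is asserted, so the realizability construction suffices. I expect the main obstacle to be purely expository rather than mathematical: one must be careful that the ``smoothing'' step does not accidentally drive some $\Delta_j$ to $0$ (it cannot, since we only ever move mass from a strictly larger gap to a strictly smaller one, and $\Delta_r\ge 1$ already) and that the realizability construction genuinely stays inside $\Omega(\mathcal{A},\mathbf{m})$, which is where the hypothesis $n=2$ is used — with $n\ge 3$ one could not freely fill the complementary positions with a single symbol. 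A secondary subtlety is that the problem is genuinely integer: the continuous relaxation (Lemma~\ref{th:lb}) gives the non-integer optimum $N/m_k$, and one must invoke convexity of $t\mapsto t^2$ to conclude that rounding to the two nearest integers, in the proportions dictated by the sum constraint, is optimal among integer profiles — this is exactly what the smoothing argument accomplishes.
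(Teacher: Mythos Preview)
Your proof is correct and takes a genuinely different route from the paper's. The paper establishes existence by invoking Corollary~\ref{th:less} (i.e., the ESA construction), then proves optimality via a geometric argument: it views the feasible set as the integer lattice points on the hyperplane $\pi_0=\{\sum_{j}\Delta_j=N\}$, observes that the continuous minimizer $\mathbf{x}=(N/m_k,\ldots,N/m_k)$ is the orthogonal projection of the origin onto $\pi_0$, uses the Pythagorean theorem to reduce minimizing $\|\cdot\|_2^2$ to minimizing $\|\cdot-\mathbf{x}\|_2^2$, and then shows by explicit computation (Tables~\ref{tab:l} and~\ref{tab:u}) that any elementary transformation moving coordinates away from $\{\ell_k,u_k\}$ while staying on $\pi_0$ increases $\|\cdot-\mathbf{x}\|_2^2$ by exactly $h^2+h>0$. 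Your smoothing/majorization argument is shorter and more standard: it directly shows that any profile containing two gaps differing by at least $2$ can be strictly improved, forcing every minimizer to concentrate on two consecutive integers, which are then uniquely pinned down by the linear constraints and the hypothesis $m_k\nmid N$. Your approach is more elementary and self-contained---your existence step does not rely on the ESA at all, only on $n=2$---whereas the paper's geometric framing makes the link to the continuous relaxation of Lemma~\ref{th:lb} explicit and quantifies the penalty for deviating, which is later echoed in the alternative proof of Theorem~\ref{th:main}.
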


\begin{proof}
By Corollary \ref{th:less}, the prospected cycle $C$ exists, is admissible, and is returned by the ESA.
We now prove optimality.
Let $\mathbf{x}\in\mathbb{R}_+^{m_k}$, with the $i$th entry $x_i=\Delta_i=N/m_k$ as in \eqref{eqn:optdistance}. Let $\mathbf{z}\in\mathbb{N}^{m_k}$ be the proposed solution to the original, discrete problem.
It should be clear that $\mathbf{x}$, $\mathbf{z}$, and later $\mathbf{y}$ and $\mathbf{w}$, are here used to denote vectors collecting values of $\Delta_i$, $i\in \hat{\mathcal{J}}_k$, and that $m_k=|\hat{\mathcal{J}}_k|$.
Let $\pi_0\subset\mathbb{R}^{m_k}$ be the hyperplane of equation $\sum_{i\in \hat{\mathcal{J}}_k}x_i=N$, that is, the locus of points respectful of the round trip linear constraint.
As said, $\mathbf{z}$ is admissible, that is, it respects the round trip linear constraint, so that $\mathbf{z}\in\pi_0$; see Lemma \ref{th:twoValuesOnly}.
The same Lemma also shows that points (like $\mathbf{z}$) with $N_{\ell,k}$ coordinates equal to $\ell_k$ and $N_{u,k}$ coordinates equal to $u_k$ are all and the only points in the integer lattice $\mathbb{Z}^{m_k}$ to lie on $\pi_0$.
Actually, since the problem exhibits symmetry in coordinate permutations, $\mathbf{z}$ is representative of any point with this number and kind of coordinates, regardless the order.
The question then arise as if some point $\mathbf{w}\in\mathbb{N}^{m_k}\cap\pi_0$ exists, such that its coordinates are not permutations of those of $\mathbf{z}$, and yielding a better value than $\mathbf{z}$ of $\mathcal{M}_2(a_k)$.

We may now look at $\mathcal{M}_2(a_k)$ as the Euclidean distance $\|\cdot\|_2$ from the origin, to be minimized over a suitable feasible region.
In the unconstrained case, the trivial solution is $\mathbf{0}$; in the relaxed form \eqref{eqn:lb} the solution is the projection $\mathbf{x}\in\pi_0$ of $\mathbf{0}$ over $\pi_0$. This last assertion can be verified directly from the proof of Lemma \ref{th:lb}.
As a matter of fact, if $f(\mathbf{x}):=\sum_{j\in\hat{\mathcal{J}}_k}x_j^2$ is the $C^\infty$ goal function and $g(\mathbf{x}):=N-\sum_{j\in\hat{\mathcal{J}}_k}x_j$ is the $C^\infty$ constraint function for $\pi_0$, then $\mathbf{x}$ has been seen to satisfy the necessary optimality condition
$$
0=\nabla L(\mathbf{x})=\nabla f(\mathbf{x})+2\lambda\nabla g(\mathbf{x}),
$$
which is the KKT necessary optimality condition \cite{Karush,KuhnTucker} for the program at hand.
It follows that the gradient $\nabla f(\mathbf{x})=2\mathbf{x}$ of the goal function in the optimal point $\mathbf{x}$ must be parallel to $\nabla g(\mathbf{x})$, the normal to plane $\pi_0$.
Therefore $\mathbf{x}$ is the projection of $\mathbf{0}$ onto $\pi_0$.
Since the vector from the origin to $\mathbf{x}$ is orthogonal to plane $\pi_0$, then any other point $\mathbf{y}$ on $\pi_0$ is more distant from $\mathbf{0}$ than $\mathbf{x}$ is, as the Pythagorean theorem yields
$$
\|\mathbf{y}\|_2^2=\|\mathbf{x}\|_2^2+\|\mathbf{y}-\mathbf{x}\|_2^2,
\quad\forall\mathbf{y}\in\pi_0,
$$
where $\mathbf{y}-\mathbf{x}$ is a vector parallel to $\pi_0$.

In the case of our current interest, that is, with the constraint that  points must lie on $\pi_0$, both vectors $\mathbf{z}-\mathbf{x}$ and $\mathbf{w}-\mathbf{x}$ are parallel to $\pi_0$ and thus orthogonal to vector $\mathbf{x}$. Therefore, we need to prove that $\|\mathbf{z}-\mathbf{x}\|_2<\|\mathbf{w}-\mathbf{x}\|_2$ or, equivalently, that $\|\mathbf{z}-\mathbf{x}\|_2^2<\|\mathbf{w}-\mathbf{x}\|_2^2$ for, then, using the Pythagorean theorem twice one gets
$$
\|\mathbf{z}\|_2^2
=\|\mathbf{x}\|_2^2+\|\mathbf{z}-\mathbf{x}\|_2^2
<\|\mathbf{x}\|_2^2+\|\mathbf{w}-\mathbf{x}\|_2^2
=\|\mathbf{w}\|_2^2,
$$
which is equivalent to the optimality of $\mathbf{z}$ and thus to the thesis.

\begin{table}[h!]
\centering
\begin{tabular}{llllcl}
\hline
status & sacrificial coord.   &   $0$     & $1$      & $\ldots$ & $h-1$    \\
\hline
\\
original & $u_k$    & $\ell_k$ & $\ell_k$ & $\ldots$ & $\ell_k$ \\
\\
after $T_{u_k,-1}^{\ell_k,+1}$ & $\ell_k$ & $u_k$    & $\ell_k$ & $\ldots$ & $\ell_k$ \\
\\
after $T_{\ell_k,-(h-1)}^{\ell_k,+1}\circ T_{u_k,-1}^{\ell_k,+1}$ & $\ell_k-(h-1)=u_k-h$ & $u_k$ & $u_k$ & $\ldots$ & $u_k$    \\
\\
\hline
\\
original & $\ell_k$    & $u_k$ & $u_k$ & $\ldots$ & $u_k$ \\
\\
after $T_{\ell_k,+1}^{u_k,-1}$ & $u_k$ & $\ell_k$    & $u_k$ & $\ldots$ & $u_k$ \\
\\
after $T_{u_k,-1}^{u_k,+(h-1)}\circ T_{\ell_k,+1}^{u_k,-1}$ & $u_k+(h-1)=\ell_k-h$ & $\ell_k$ & $\ell_k$ & $\ldots$ & $\ell_k$    \\
\\
\hline
\end{tabular}
\caption{Visual proof of (\ref{eqn:Tul}), top, and of (\ref{eqn:Tlu}), bottom.}
\label{tab:composition}
\end{table}

One now notices that $\ell_k$ and $u_k$ are the closest integers to the real coordinates $x_j=N/m_k$ of $\mathbf{x}$, so that any other integer coordinate increases the distance from $\mathbf{x}$.
Nonetheless, one may think to produce an integer point $\mathbf{w}$ with a greater number of $\ell_k$ or of $u_k$ coordinates than $\mathbf{z}$, depending on whether $N/m_k$ be closer to $\ell_k$ or $u_k$, respectively.
Since feasible points must lie on $\pi_0$, such operation involves promoting some quantity $h$ of coordinates to a shorter distance from $N/m_k$, at the price of sacrificing at least one coordinate to a greater distance from $N/m_k$, in such a way that $\sum_{j\in\hat{\mathcal{J}}_k}\Delta_j=N$ be preserved.

Let us introduce the transformation
\begin{equation}\label{eqn:Tl}
T_{\ell_k,-h}^{\ell_k,+1}\,:\,\mathbb{Z}^{m_k}\cap\pi_0\rightarrow\mathbb{Z}^{m_k}\cap\pi_0
\end{equation}sending $h$ coordinates equal to $\ell_k$ to be $\ell_k+1=u_k$, while 1 coordinate equal to $\ell_k$ is sent to $\ell_k-h$. In the notation above, the quantity of coordinates affected by either change is implicitly determined by the claim to remain on the feasible plane $\pi_0$. Owing to the combinatorial invariance under coordinate permutations, the actual identity of affected coordinates is pointless. The same notation allows denoting by
\begin{equation}\label{eqn:Tu}
T_{u_k,-1}^{u_k,+h}\,:\,\mathbb{Z}^{m_k}\cap\pi_0\rightarrow\mathbb{Z}^{m_k}\cap\pi_0
\end{equation}
a transformation that sends $h$ coordinates equal to $u_k$ to be $u_k-1=\ell_k$, while 1 coordinate equal to $u_k$ is sent to $u_k+h$.
It is immediately seen that $T_{u_k,-1}^{\ell_k,+1}=T_{\ell_k,+1}^{u_k,-1}$ simply swaps one $\ell_k$ with one $u_k$ coordinate, which is not of interest.
Moreover, sacrificing the other coordinate, in order to allow promotions, in essence reduces to (\ref{eqn:Tl}) and (\ref{eqn:Tu}), because
\begin{equation}\label{eqn:Tul}
T_{u_k,-h}^{\ell_k,+1}=T_{\ell_k,-(h-1)}^{\ell_k,+1}\circ T_{u_k,-1}^{\ell_k,+1},
\end{equation}
\begin{equation}\label{eqn:Tlu}
T_{u_k,-1}^{\ell_k,+h}=T_{u_k,-1}^{u_k,+(h-1)}\circ T_{\ell_k,+1}^{u_k,-1},
\end{equation}
as visually reported in Table \ref{tab:composition}.
Therefore (\ref{eqn:Tl}) and (\ref{eqn:Tu}) are the only cases of interest.

\begin{table}[h!]
\centering
\begin{tabular}{|l|ll|ll|}
\hline
& \multicolumn{2}{l|}{\textbf{from}} & \multicolumn{2}{l|}{\textbf{to}} \\
\# & $\Delta_j$ & contribution & $\Delta_j$ & contribution \\
\hline
& & & & \\
$1$ & $\ell_k$ & $\left(\frac{N}{m_k}-\ell_k\right)^2$ & $\ell_k-h$     & $\left(\frac{N}{m_k}-\ell_k\right)^2+2\left(\frac{N}{m_k}-\ell_k\right)h+h^2$ \\
& & & & \\
$h$ & $\ell_k$ & $\left(\frac{N}{m_k}-\ell_k\right)^2$ & $\ell_k+1$ & $\left(\frac{N}{m_k}-\ell_k\right)^2-2\left(\frac{N}{m_k}-\ell_k\right)+1$ \\
& & & & \\
\hline
& & & & \\
$\Sigma$ & & $(1+h)\left(\frac{N}{m_k}-\ell_k\right)^2$ &  & $(1+h)\left(\frac{N}{m_k}-\ell_k\right)^2+h^2+h$ \\
& & & & \\
\hline
\end{tabular}
\caption{Effects of $T_{\ell_k,-h}^{\ell_k,+1}$, that is, one instance moves from distance $\ell_k$ to $\ell_k-h$, allowing $h$ instances to move from distance $\ell_k$ to $\ell_k+1=u_k$; first rows report individual contributions and quantities, while the last reports their sum.}
\label{tab:l}
\end{table}
\begin{table}[h!]
\centering
\begin{tabular}{|l|ll|ll|}
\hline
& \multicolumn{2}{l|}{\textbf{from}} & \multicolumn{2}{l|}{\textbf{to}} \\
\# & $\Delta_j$ & contribution & $\Delta_j$ & contribution \\
\hline
& & & & \\
$1$ & $u_k$ & $\left(u_k-\frac{N}{m_k}\right)^2$ & $u_k+h$     & $\left(u_k-\frac{N}{m_k}\right)^2+2\left(u_k-\frac{N}{m_k}\right)h+h^2$ \\
& & & & \\
$h$ & $u_k$ & $\left(u_k-\frac{N}{m_k}\right)^2$ & $u_k-1$ & $\left(u_k-\frac{N}{m_k}\right)^2-2\left(u_k-\frac{N}{m_k}\right)+1$ \\
& & & & \\
\hline
& & & & \\
$\Sigma$ & & $(1+h)\left(u_k-\frac{N}{m_k}\right)^2$ &  & $(1+h)\left(u_k-\frac{N}{m_k}\right)^2+h^2+h$ \\
& & & & \\
\hline
\end{tabular}
\caption{Effects of $T_{u_k,-1}^{u_k,+h}$, that is, one instance moves from distance $u_k$ to $u_k+h$, allowing $h$ instances to move from distance $u_k$ to $u_k-1=\ell_k$; first rows report individual contributions and quantities, while the last reports their sum.}
\label{tab:u}
\end{table}

The two transformations are dealt with in Table \ref{tab:l} and in Table \ref{tab:u}, respectively, showing the incremental effects onto the square of the distance from $\mathbf{x}$. Following simple computations, it is seen that the cumulated effect of any such move is not beneficial, for a net change equal to $h^2+h>0$ is produced.
Since $\mathbf{w}$ is to be produced by $\mathbf{z}$ throughout the (possibly repeated) application of (\ref{eqn:Tl}) and/or (\ref{eqn:Tu}), it follows that $\|\mathbf{z}-\mathbf{x}\|_2^2<\|\mathbf{w}-\mathbf{x}\|_2^2$.
This completes the prove.
\end{proof}

We are finally in the condition to minimize $\mathcal{M}_2(C)$, or equivalently $\sigma^2(C)$.

\begin{theorem}\label{th:main}
Let $(\mathcal{A},\mathbf{m})$ be a cyclic sequencing problem with a binary symbol alphabet.
Then the cycle $C\in\Omega(\mathcal{A},\mathbf{m})$ returned by the Euclidean Sequencing Algorithm minimizes variance.
\end{theorem}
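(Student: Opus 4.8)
The plan is to reduce the statement to the symbol-wise analysis already performed. By Corollary \ref{th:optidentity} it is enough to show that the cycle $C$ returned by the ESA minimizes the second raw moment $\mathcal{M}_2(C)$ over $\Omega(\mathcal{A},\mathbf{m})$, and by Definition \ref{def:moment} this moment decomposes as $\mathcal{M}_2(C)=\mathcal{M}_2(a_1)+\mathcal{M}_2(a_2)$. Since for every admissible cycle both sub-moments are nonnegative, one has the elementary bound $\min_{C\in\Omega}\mathcal{M}_2(C)\ge\min_{C\in\Omega}\mathcal{M}_2(a_1)+\min_{C\in\Omega}\mathcal{M}_2(a_2)$, with equality attained by a particular cycle precisely when that single cycle minimizes both sub-moments simultaneously. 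The entire proof thus amounts to verifying that the ESA output does exactly this.

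Write $a_1$ for the more abundant and $a_2$ for the less abundant symbol, so $m_1\ge m_2$ and $N=m_1+m_2$. I would first record the available divisibility structure: since $1<N/m_1<2$ whenever $m_1>m_2$, one has $m_1\mid N$ if and only if $m_1=m_2$, while $m_2\mid N$ may hold or fail independently. I would then argue symbol by symbol. If $m_k\mid N$ for a symbol $a_k$, the ESA assigns distance $N/m_k$ to every instance of $a_k$ --- this is Remark \ref{rem:equi} when $m_1=m_2$ (applied to both symbols at once, $C=(a_1a_2)^{\overline{m}}$) and Remark \ref{divisible} for the less abundant symbol when $m_1>m_2$ and $m_2\mid N$ --- so by Lemma \ref{th:lb} and \eqref{eqn:optdistance} the corresponding sub-moment equals the relaxed lower bound $LB_k$, hence is minimal over all of $\Omega$. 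If instead $m_k\nmid N$, then for the less abundant symbol Corollary \ref{th:less} shows the ESA realizes exactly the two-valued profile of Lemma \ref{th:twoValuesOnly}, and for the more abundant symbol a direct computation gives $\ell_1=1$, $u_1=2$, $N_{\ell,1}=m_1u_1-N=|m_1-m_2|$ and $N_{u,1}=N-m_1\ell_1=m_2$, so that the arrangement of Corollary \ref{th:more} ($|m_1-m_2|$ instances at distance $1$, the remaining $m_2$ at distance $2$) is again precisely the two-valued profile of Lemma \ref{th:twoValuesOnly}. In both cases Lemma \ref{th:notsoeasylemma}, applied with $k$ the index of the symbol under consideration, shows that this profile minimizes $\mathcal{M}_2(a_k)$ over $\Omega$.

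Putting the pieces together, in every configuration the single cycle $C$ returned by the ESA attains $\min_{C'\in\Omega}\mathcal{M}_2(a_1)$ and $\min_{C'\in\Omega}\mathcal{M}_2(a_2)$ at the same time, hence attains $\min_{C'\in\Omega}\mathcal{M}_2(C')$ by the decomposition bound of the first paragraph; by Corollary \ref{th:optidentity}, $C$ then minimizes $\sigma^2(C)$, which is the claim. The only point needing genuine care --- and, conceptually, the heart of the argument --- is the matching between the purely combinatorial description of the ESA output furnished by Corollaries \ref{th:more} and \ref{th:less} and the optimal two-valued profiles of Lemmas \ref{th:twoValuesOnly} and \ref{th:notsoeasylemma}; once the identities $\ell_1=1$, $u_1=2$, $N_{\ell,1}=|m_1-m_2|$ are in place this matching is immediate, and no estimate beyond those already established (in particular the Pythagorean/lattice argument inside Lemma \ref{th:notsoeasylemma}) is required.
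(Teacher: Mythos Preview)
Your proposal is correct and follows essentially the same route as the paper's own proof: decompose $\mathcal{M}_2(C)$ into the two sub-moments, handle the divisible case via Lemma~\ref{th:lb} (invoking Remark~\ref{rem:equi} and Remark~\ref{divisible}), handle the non-divisible case by matching the ESA output of Corollaries~\ref{th:more} and~\ref{th:less} against the optimal two-valued profile of Lemma~\ref{th:notsoeasylemma} (with the same explicit computation $\ell_1=1$, $u_1=2$, $N_{\ell,1}=|m_1-m_2|$ for the more abundant symbol), and conclude via Corollary~\ref{th:optidentity}. Your write-up is slightly more explicit than the paper in stating the elementary decomposition bound $\min\mathcal{M}_2(C)\ge\min\mathcal{M}_2(a_1)+\min\mathcal{M}_2(a_2)$ and in noting that $m_1\mid N\iff m_1=m_2$, but the argument is otherwise identical.
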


\begin{proof}
The case of equally abundant symbols is discussed in Remark \ref{rem:equi} and agrees with ESA solution.
We now address the $m_1\neq m_2$ case and show that the cycle $C$ returned by the ESA, which has been characterized by Corollary \ref{th:less}, satisfies the sufficient optimality conditions.

Let $a_k$ be the more abundant symbol.
Clearly, $m_k\nmid N$.
Precisely, $0<N-m_k<m_k$ and $N/m_k=1+(N-m_k)/m_k\in(1,2)$.
Following Definition \ref{def:approx}, $\ell_k=\lfloor N/m_k\rfloor=1$ and $u_k=\lceil N/m_k\rceil=2$.
Considering (\ref{def:numbers}), since $\ell_k=1$, then $N_{u,k}=N-m_k\ell_k=N-m_k$ and, since $u_k=2$, then
$N_{\ell,k}=m_k u_k-N=2m_k-N=m_k-N_{u,k}$.
We conclude with Corollary \ref{th:more}, that the cycle $C$ returned by the ESA satisfies the conditions of Lemma \ref{th:notsoeasylemma}, thus minimizing the $(2,k)$th sub-moment $\mathcal{M}_2(a_k)$ for the more abundant symbol.

Let $a_k$ be the less abundant symbol. Owing to Corollary \ref{th:less}, the conditions of Lemma \ref{th:lb} if $m_k\mid N$ and of Lemma \ref{th:notsoeasylemma} if $m_k\nmid N$ are satisfied by the cycle $C$ returned by the ESA, which thus minimizes the $(2,k)$th sub-moment $\mathcal{M}_2(a_k)$ for the less abundant symbol.

It follows that ESA cycle $C$ minimizes the 2nd moment $\mathcal{M}_2(C)$ and, for Corollary \ref{th:optidentity}, it also minimizes the variance $\sigma^2(C)$.
\end{proof}

Theorem \ref{th:main} essentially collects results reached above.
The following, alternative and more direct finalization helps understanding that the simple reason behind ESA optimality is that more-than-linear powers, such as squares, heavily penalize deviations from patterns characterized as much as possible by uniform distances.

\begin{proof}[Alternative proof]
Without loss of generality, let $A_0$ (resp., $B_0$) be the more (resp., less) abundant symbol.
The ESA returns a pattern like
\begin{equation}\label{eqn:pattern}
C=
(\overset{1}{A_0}\ldots\overset{1}{A_0}\cdot\overset{2}{A_0}B_0)
\cdot
(\overset{1}{A_0}\ldots\overset{1}{A_0}\cdot\overset{2}{A_0}B_0)
\cdot
\ldots
\cdot
(\overset{1}{A_0}\ldots\overset{1}{A_0}\cdot\overset{2}{A_0}B_0),
\end{equation}
where the distance from one $A_0$ to the next is reported on top of the symbol.
As already discussed in the previous proof, Lemma \ref{th:notsoeasylemma} and Lemma \ref{th:lb} show that the ESA returned pattern (\ref{eqn:pattern}) minimizes $\mathcal{M}_2(B_0)$.
(In the spirit of directness proper of this proof, one may also easily show that the ESA pattern for $B_0$ minimizes the deviation from the average distribution.)
It remains to proof that pattern (\ref{eqn:pattern}) be optimal also for $A_0$, for the same argument as in the preceding proof leads to variance minimization.

Let us proceed by contradiction.
Any reshuffle of the pattern that keeps all $B_0$'s unclustered does not modify the distances between $A_0$'s, while the $B_0$ pattern may lose optimality. Therefore this case is not improving global optimality.
If a different kind of reshuffle brings some $B_0$'s to cluster, then $B_0$ optimality is worsened, for what above. Still, it could be worthwhile, provided that $A_0$ pattern improves more.

Since we are trying to minimize $\mathcal{M}_2(A_0)$, and since those $A_0$'s with $\Delta_j=1$ are already at the theoretical minimum, we need to decrease from 2 to 1 the distance for some other $A_0$.
Consider the round trip linear constraint $\sum_{j\in C^{-1}(A_0)}\Delta_j=N$ for the $A_0$'s: if some distance decreases, some other increases, in such a way that the sum of all distances remains equal to $N$.
Since decrements are from 2 to 1, each one contributes to a +1 increment somewhere else.
Let us conventionally single out and group the $A_0$'s that are contributing to increase the contribution of a same other prescribed $A_0$.
(Clearly, any reshuffle reduces to a combination of cases like this.)
Then, let $\Delta_{i_1},\ldots,\Delta_{i_r}$ be the distances of $r$ $A_0$'s decreasing from 2 to 1 and $\Delta_{i_0}$ the distance of the same other $A_0$ increasing by $r$.
\begin{itemize}

\item If $r=0$, nothing happens.

\item If $r=1$, then one $A_0$ decreases from 2 to 1 and either one increases from 1 to 2 (null net gain), or one other $A_0$ increases from $2$ to $3$ (net worsening when squaring).

\item If $r>1$, the gain is $r(2^2-1^2)=3r$, while the loss is either $(1+r)^2-1^2=r^2+2r$, when an increase from $1$ to $1+r$ happens, or $(2+r)^2-2^2=r^2+4r$, when an increase from $2$ to $2+r$ happens.
For $r>1$ integer, in the first case $r^2+2r>r+2r=3r$ and in the second case $r^2+4r>4r>3r$.

\end{itemize}
It follows that the loss is never lower than the gain, contradicting the possibility to improve $\mathcal{M}_2(A_0)$.
\end{proof}

\begin{remark}\label{rem:necessity}
The alternative proof, as well as the proofs of Lemma \ref{th:lb} and of Lemma \ref{th:notsoeasylemma}, indirectly yield in addition that, in the $n=2$ case, a unique distance $N/m_k$ or the number of instances $N_{\ell,k}$ and $N_{u,k}$ along with the distances correspondingly associated (as applicable depending on the divisibility condition), for all of the symbols in the alphabet, constitute a not only sufficient, but also necessary optimality condition.
Particularly, the less abundant symbol is necessarily unclustered in an optimal cycle.
\end{remark}

\section{Applications}
\label{sec:apps}

In some applicative contexts of particular relevance, it may be opportune to look for a possibly cyclic, variance minimizing sequence $C$ of the symbols from some alphabet $\mathcal{A}$ and with prescribed multiplicities $\mathbf{m}$.

Examples include flexible productive systems, able to to manufacture a large wealth of product types (here identified with the symbols in the alphabet), and that are efficiently run provided that the input sequence (here identified with $C$) is such to uniformly spread different product codes in time.
The binary case is exactly solved by the Euclidean Sequencing Algorithm.

A typical and easily understandable example is constituted by a multi-forking (particularly, by-forking) productive system  characterized by an initial branch common to all product types and then followed by a plurality (particularly, a couple) of finalization branches, each one being product-type specific.
Assume that no relevant cycle time difference exists among the finalization branches, but that the initial common branch be faster, which is a reasonable design choice owing to system topology.
In order to feed finalization branches uniformly, the rate of parts reaching the end of the common initial branch must be regular and uniform in time for any product code, in such a way that all branches are kept active and no parts accumulate in the crucial multi-fucation (particularly, bi-furcating) point.
Clearly, the impact of buffers may help managing difficult periods, but in the long run an efficient functioning is only possible under the respect of average supply rates from the upstream portion to downstream.
Equivalently, this demands the minimization of an index of dispersion around the mean value, exactly like variance $\sigma^2$ defined in (\ref{eqn:variance}).

The similitude between some aspects of industrial manufacturing systems and of electronic processing of information is well known. The conclusions above can thus be ported to the latter context.
Particularly, the problem of task scheduling in electronic processors provide interesting applications. Consider for instance a serial phase characterized by two distinct kinds of tasks, followed by two task-kind specific, parallel phases. Also in this case a uniform task distribution may result in improved performance. Applications to telecommunications are also easily envisaged, and particularly routing information packets throughout two distinct channels with the necessary elaboration performed by some suitable, common, upstream processing unit.

\begin{figure}[t!]
\centering
\includegraphics[width=0.8\linewidth]{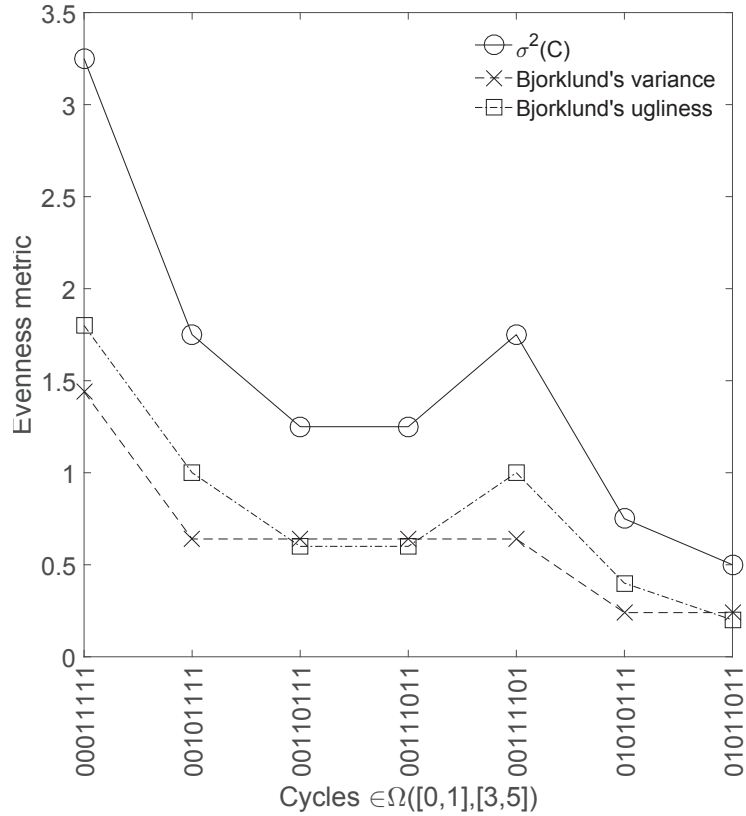}
\caption{Comparison between evenness metrics for $(\mathcal{A},\mathbf{m})=([0,1],[3,5])$: Proposed $\sigma^2(C)$ as in (\ref{eqn:variance}), solid; Bjorklund's variance, dashed, and ugliness, dash-dotted, as in \cite{Bjorklund2}; if 4-step rotations are applied, the last cycle on the right corresponds to ESA $01101101$, while the penultimate cycle on the right corresponds to non-ESA $01110101$.}
\label{fig:comparison}
\end{figure}

In both cases, manufacturing systems and information elaboration, the cyclic nature of the processes leads to the concept of cycles as a characteristic condition of interest. A similar cyclic request is found in applications to musical rhythms, as in Toussaint \cite{Toussaint}, and in applications to even pulse pattern generation in SNS accelerators in nuclear physics, as in Bjorklund \cite{Bjorklund1,Bjorklund2}.
As anticipated, in \cite{Bjorklund2} \emph{pulse sets} are only considered when computing evenness metrics.
Translated into the terminology of the present work, this means that, in Definitions \ref{def:mean} and \ref{def:variance}, summations over all items (i.e., $\sum_{j=1}^n$) have to be replaced by summations over pulse instances only (i.e., $\sum_{j\in\mathcal{J}_k}$, for the $k$ corresponding to symbol 1). According to Bjorklund and reformulating his statements in our notation, variance is unfit as evenness metric because, e.g., considering the cyclic sequencing problem $([0,1],[3,5])$, the cycles $C_1:=[0,1,1,0,1,1,0,1]$ (computed with the ESA) and $C_2=[0,1,1,1,0,1,0,1]$ (not computed with the ESA) have both Bjorklund-variance $\sigma^2=0.24$, $C_1$ being better looking than $C_2$ as for evenness.
According to the theory presently developed, $C_2$ is immediately seen not to satisfy the necessary (see Remarks \ref{rem:necessity}) and sufficient (see Lemma \ref{th:notsoeasylemma}) optimality conditions while $C_1$ does. Indeed, $\sigma^2(C_1)=0.5$ while $\sigma^2(C_2)=0.75$ with (\ref{eqn:variance}), so that, as expected from the theory, the case is correctly dealt with and the even pattern is preferred to the other.
When compared with Bjorklund's ugliness function \cite{Bjorklund2}, $\sigma^2(C)$ proves much simpler, both from the theoretical and the practical standpoint; see Figure \ref{fig:comparison}, where a comparison is shown for $(\mathcal{A},\mathbf{m})=([0,1],[3,5])$ between variance as proposed in the present work as opposed to Bjorklund's, along with Bjorklund's ugliness function.






\end{document}